    \title{Reducing Protocol Analysis with XOR to the XOR-free Case in
           the\\ Horn Theory Based Approach\footnote{An
             abridged version of this paper appears in CCS
             2008 \cite{KuestersTruderung-CCS-2008}. This work was partially supported
             by the DFG under Grant KU 1434/4-1, the SNF
             under Grant 200021-116596, and the Polish Ministry of Science
             and Education under Grant 3 T11C 042 30.}}
    \author{ Ralf K\"usters\\{\normalsize University of Trier, Germany}\\\texttt{kuesters@uni-trier.de}
    \and Tomasz Truderung\\{\normalsize University of
      Trier, Germany and}\\
    {\normalsize Wrocław University, Poland}\\\texttt{truderun@uni-trier.de} }
    \date{}
    \newenvironment{ienum}{\begin{compactenum}[(i)]}{\end{compactenum}}
    \newenvironment{aenum}{\begin{compactenum}[(a)]}{\end{compactenum}}
    \newenvironment{nenum}{\begin{compactenum}[1.]}{\end{compactenum}}
    \def\parag#1{\paragraph{#1.}}
\begin{document}

    \maketitle

    \begin{abstract}\small
      In the Horn theory based approach for cryptographic
      protocol analysis, cryptographic protocols and
      (Dolev-Yao) intruders are modeled by Horn theories and
      security analysis boils down to solving the derivation
      problem for Horn theories.  This approach and the tools
      based on this approach, including ProVerif, have been
      very successful in the automatic analysis of
      cryptographic protocols w.r.t.~an unbounded number of
      sessions. However, dealing with the algebraic properties
      of operators such as the exclusive OR (XOR) has been
      problematic. In particular, ProVerif cannot deal with
      XOR.

      In this paper, we show how to reduce the derivation
      problem for Horn theories with XOR to the XOR-free case.
      Our reduction works for an expressive class of Horn
      theories. A large class of intruder capabilities and
      protocols that employ the XOR operator can be modeled by
      these theories. Our reduction allows us to carry out
      protocol analysis by tools, such as ProVerif, that cannot
      deal with XOR, but are very efficient in the XOR-free
      case. We implemented our reduction and, in combination
      with ProVerif, applied it in the automatic analysis of
      several protocols that use the XOR operator.  In one
      case, we found a new attack.
    \end{abstract}

\section{Introduction}

In the Horn theory based approach for cryptographic
protocol analysis, cryptographic protocols and the
so-called Do\-lev-Yao intruder are modeled by Horn
theories.  The security analysis, including the analysis of
secrecy and authentication properties, then essentially
boils down to solving the derivation problem for Horn
theories, i.e., the question whether a certain fact is
derivable from the Horn theory. This kind of analysis takes
into account that an unbounded number of protocol sessions
may run concurrently. While the derivation problem is
undecidable in general, there are very successful automatic
analysis tools, with ProVerif \cite{Blanchet-CSFW-2001}
being one of the most promintent ones among them, which
work well in practice.

However, dealing with the algebraic properties of
operators, such as the exclusive OR (XOR), which are
frequently used in cryptographic protocols, has been
problematic in the Horn theory approach. While ProVerif has
been extended to deal with certain algebraic properties in
\cite{BlanchetAbadiFournet-JLAP2007}, associative
operators, which in particular include XOR, are still out
of the scope. Even though there exist some decidability
results for the derivation problem in certain classes of
Horn theories with XOR
\cite{ComonCortier-RTA-2003, VermaSeidlSchwentick-CADE-2005,
CortierKeighrenSteel-TACAS-2007}, the
decision procedures have not led to practical implementations yet,
except for the very specific setting in
\cite{CortierKeighrenSteel-TACAS-2007} (see the related work).

The goal of this work is therefore to come up with a
practical approach that allows for the automatic analysis
of a wide range of cryptographic protocols with XOR, in a
setting with an unbounded number of protocol sessions.  Our
approach is to reduce this problem to the one without XOR,
i.e., to the simpler case without algebraic properties.
This simpler problem can then be solved by tools, such as
ProVerif, that a priori cannot deal with XOR, but are very
efficient in solving the XOR-free case. More precisely, the
contribution of this paper is as follows.

\parag{Contribution of this paper}
We consider an expressive class of (unary) Horn theories,
called $\xor$-linear (see Section~\ref{sec:dominated}). A
Horn theory is $\xor$-linear, if for every Horn clause in
this theory, except for the clause that models the
intruder's ability to apply the XOR operator
($I(x),I(y)\rightarrow I(x\xor y)$), the terms that occur
in these clauses are $\xor$-linear. A term is $\xor$-linear
if for every subterm of the form $t\xor t'$ in this term,
it is true that $t$ or $t'$ does not contain variables. We
do not put any other restriction on the Horn theories. In
particular, our approach will allow us to deal with all
cryptographic protocols and intruder capabilities that can
be modeled as $\xor$-linear Horn theories.

We show that the derivation problem for $\xor$-linear Horn
theories with XOR can be reduced to a purely syntactic
derivation problem, i.e., a derivation problem where the
algebraic properties of XOR do not have to be considered
anymore (see Section~\ref{sec:dominated},
\ref{sec:reduction}, and \ref{sec:authentication}). Now,
the syntactic derivation problem can be solved by highly
efficient tools, such as ProVerif, which cannot deal with
XOR.  We believe that the techniques developed in this
paper are interesting beyond the case of XOR. For example,
using these techniques it might be possible to also deal
with other operators, such as
Diffie-Hellman-Exponentiation.

Using ProVerif, we apply our two step approach---first reduce the
problem, then run ProVerif on the result of the reduction---to the
analysis of several cryptographic protocols that use the XOR operator
in an essential way (see
Section~\ref{sec:implementationandexperiments}).  The experimental
results demonstrate that our approach is practical.  In one case, we
found a new attack on a protocol.

We note that a potential alternative to our approach is to perform
unification modulo XOR instead of syntactic unification in a
resolution algorithm such as the one employed by ProVerif. Whether or
not this approach is practical is an open problem. The main difficulty
is that unification modulo XOR is much more inefficient than syntactic
unification; it is NP-complete rather than linear and, in general,
there does not exist a (single) most general unifier.

\parag{Related work}
In \cite{ComonCortier-RTA-2003,VermaSeidlSchwentick-CADE-2005},
classes of Horn theories (security protocols) are
identified for which the derivation problem modulo XOR is
shown to be decidable.  These classes are orthogonal to the
one studied in this paper. While $\xor$-linearity is not
required, other restrictions are put on the Horn clauses,
in particular linearity on the occurrence of variables. The
classes in
\cite{ComonCortier-RTA-2003,VermaSeidlSchwentick-CADE-2005}
do, for example, not contain the Recursive Authentication
and the SK3 protocol, which, however, we can model (see
Section~\ref{sec:implementationandexperiments}). To the
best of our knowledge, the decision procedures proposed in
\cite{ComonCortier-RTA-2003,VermaSeidlSchwentick-CADE-2005}
have not been implemented. The procedure proposed in
\cite{ComonCortier-RTA-2003} has non-elementary runtime.

In \cite{Steel-CADE-2005, CortierKeighrenSteel-TACAS-2007,
  CortierDelauneSteel-CSF-2007}, the IBM 4758 CCA API,
which we also consider in our experiments, has been
analyzed. Notably, in
\cite{CortierKeighrenSteel-TACAS-2007} a decision
procedure, along with an implementation, is presented for
the automatic analysis of a class of security protocols
which contains the IBM 4758 CCA API.  However, the protocol
class and the decision procedure is especially tailored to
the IBM 4758 CCA API. The only primitives that can be
handled are the XOR operator and symmetric encryption. All
other primitives, such as pairing, public-key encryption,
and hashing, are out of the scope of the method in
\cite{CortierKeighrenSteel-TACAS-2007}. The specification
of the IBM 4758 CCA API in
\cite{CortierKeighrenSteel-TACAS-2007} is hard coded in a C
implementation.

In \cite{BlanchetAbadiFournet-JLAP2007}, it is described
how the basic resolution algorithm used in ProVerif
can be extended to handle some equational theories.
However, as already mentioned in that work, associative
operators, such as XOR, are out of the scope of this
extension.

In \cite{ComonDelaune-RTA-2005}, the so-called finite
variant property has been studied for XOR and other
operators.  It has been used (implicitly or explicitly) in
other works
\cite{ComonShmatikov-LICS-2003,ComonCortier-RTA-2003}, and
also plays a role in our work (see
Section~\ref{sec:reduction}).

In
\cite{ChevalierKuestersRusinowitchTuruani-LICS-2003,ComonShmatikov-LICS-2003,KuestersTruderung-STACS-2007},
decision procedures for protocol analysis with XOR w.r.t.~a
\emph{bounded} (rather than an unbounded) number of
sessions are presented. The notion of $\xor$-linearity that
we use is taken from the work in
\cite{KuestersTruderung-STACS-2007}. That work also
contains some reduction argument. However, our work is
different to \cite{KuestersTruderung-STACS-2007} in several
respects: First, of course, our approach is for an
\emph{unbounded} number of sessions, but it is not
guaranteed to terminate. Second, the class of protocols
(and intruder capabilities) we can model in our setting is
much more general than the one in
\cite{KuestersTruderung-STACS-2007}.  Third, the reduction
presented in \cite{KuestersTruderung-STACS-2007} heavily
depends on the bounded session assumption; the argument
would not work in our setting. Fourth, the reduction
presented in \cite{KuestersTruderung-STACS-2007} is not
practical.

\parag{Structure of this paper} In
Section~\ref{sec:preliminaries}, we introduce Horn theories
and illustrate how they are used to model cryptographic
protocols by a running example. The notion of
$\xor$-linearity is introduced in
Section~\ref{sec:dominated}, along with a proposition that
is the key to our main result, i.e., the reduction.  The
reduction is then presented in Section~\ref{sec:reduction},
with extensions to authentication presented in
Section~\ref{sec:authentication}. We discuss our
implementation and experimental results in
Section~\ref{sec:implementationandexperiments}. Proofs
omitted in the main part of the paper are presented in the
appendix.

We point the reader to \cite{implementation} for our implementation.


    \section{Preliminaries}\label{sec:preliminaries}

In this section, we introduce Horn theories modulo the XOR
operator and illustrate how these theories are used to
model the so-called Dolev-Yao intruder and cryptographic
protocols by a running example.

\subsection*{Horn theories}
Let $\Sigma$ be a finite signature and $V$ be a set of
variables.  The set of terms over $\Sigma$ and $V$ is
defined as usual.  By $\var(t)$ we denote the set of
variables that occur in the term $t$.  We assume $\Sigma$
to contain the binary function symbol $\xor$
(\emph{exclusive OR}), as well as a constant $0$. To model
cryptographic protocols, $\Sigma$ typically also contains
constants (\emph{atomic messages}), such as principal
names, nonces, and keys, the unary function symbol
$\hash(\cdot)$ (\emph{hashing}), the unary function symbol
$\pub{\cdot}$ (\emph{public key}), and binary function
symbols such as $\an{\cdot,\cdot}$ (\emph{pairing}),
$\enc{\text{\normalsize$\cdot$}}\cdot$ (\emph{symmetric
  encryption}), and
$\penc{\text{\normalsize$\cdot$}}{\!\cdot\!}$ (\emph{public
  key encryption}). The signature $\Sigma$ may also contain
any other free function symbol, such as various kinds of
signatures and MACs. We only require that the corresponding
intruder rules are $\xor$-linear (see Section
\ref{sec:dominated}), which rules that do not contain the
symbol $\xor$ always are.

\emph{Ground terms}, i.e.\ terms without variables, are
called \emph{messages}.  For a unary predicate $q$ and a
(ground) term $t$ we call $q(t)$ a \emph{(ground) atom}.  A
\emph{substitution} is a finite set of pairs of the form 
$\sigma = \{t_1/x_1,\dots,t_n/x_n\}$, where $t_1,\dots,t_n$ are terms
and $x_1,\dots, x_n$ are variables. The set
$\dom(\sigma)=\{x_1,\dots,x_n\}$ is called the domain of
$\sigma$. We define $\sigma(x)=x$ if $x\notin
\dom(\sigma)$. The application $t\sigma$ of $\sigma$ to a
term/atom/set of terms $t$ is defined as usual.

We call a term \emph{standard} if its top-symbol is not
$\xor$; otherwise, it is called \emph{non-standard}. For
example, the term $\an{a,b\xor a}$ is standard, while
$b\xor a$ is non-standard. 

A non-standard subterm $s$ of $t$ is called
\emph{complete}, if either $s=t$ or $s$ occurs in $t$ as a
direct subterm of some standard term.  For instance, for $t
= \an{a \xor \senc{(x\xor y) \xor z}y,b}$, the terms $a
\xor \senc{(x\xor y) \xor z}{y}$ and $(x\xor y)\xor z$ are
complete non-standard subterms of $t$, but $x\xor y$ is
not.

    
To model the algebraic properties of the exclusive OR
(XOR), we consider the congruence relation $\sim$ on terms
induced by the following equational theory (see, e.g.,
\cite{ComonShmatikov-LICS-2003,ChevalierKuestersRusinowitchTuruani-LICS-2003}):
\begin{align} 
\label{eq:xor1}  
    x\xor y&=y\xor x &
    (x\xor y)\xor z&=x\xor (y\xor z)\\
\label{eq:xor2}  
    x\xor x&=0
    &x\xor 0&=x
\end{align} 
For example, we have that $t_{ex}=a\xor b\xor \enc{k}{0}
\xor b\xor\enc{k}{c\xor c}\sim a$.  (Due to the
associativity of $\xor$ we often omit brackets and simply
write $a\xor b\xor c$ instead of $(a\xor b)\xor c$ or
$a\xor (b\xor c)$.) For atoms $q(t)$ and $q'(t')$, we write
$q(t)\sim q'(t')$ if $q=q'$ and $t\sim t'$.  We say that
two terms are \emph{equivalent modulo AC}, where AC stands
for associativity and commutativity, if they are equivalent modulo
\eqref{eq:xor1}. A term is \emph{$\xor$-reduced} if modulo AC, the
identities \eqref{eq:xor2}, when interpreted as reductions from
left to right, cannot be applied. Clearly, every term can be turned
into $\xor$-reduced form and this form is uniquely determined modulo
AC. For example, $a$ is the $\xor$-reduced form of $t_{ex}$.
    
A \emph{Horn theory} $T$ is a finite set of \emph{Horn clauses} of the
form $a_1,\ldots,a_n\ra a_0$, where $a_i$ is an atom for every $i\in
\{0,\ldots,n\}$. We assume that the variables that occur on the
right-hand side of a Horn clause also occur on the left-hand
side\footnote{This assumption can easily be relaxed for variables that
are substituted only be cetrain ``good'' terms, where ``good'' means
$\C$-dominated (see Section \ref{sec:dominated})}.  If $n=0$, i.e.,
the left-hand side of the clause is always true, we call
the Horn clause $a_0$ a \emph{fact}.
    
Given a Horn theory $T$ and a ground atom $a$, we say that \emph{$a$
can syntactically be derived from $A$ w.r.t.~$T$ } (written $T\vdash
a$) if there exists a \emph{derivation} for $a$ from $T$,
i.e., there exists a sequence $\pi=b_1,\ldots,b_l$ of ground atoms
such that $b_l = a$ and for every $i\in \{1,\ldots,l\}$ there exists a
substitution $\sigma$ and a Horn clause $a_1,\ldots,a_n\rightarrow
a_0$ in $T$ such that $a_0\sigma= b_i$ and for every $j\in
\{1,\ldots,n\}$ there exists $k\in \{1,\ldots,i-1\}$ with $a_j\sigma =
b_k$. In what follows, we sometimes refer to $b_i$ by $\pi(i)$ and to
$b_1,\ldots,b_i$ by $\pi_{\le i}$. The \emph{length} $l$ of a
derivation $\pi$ is referred to by $|\pi|$.

We call a sequence $b_1,\ldots,b_l$ of ground atoms an
\emph{incomplete syntactic derivation of $a$ from $T$} if
$b_l=a$ and $T\cup \{b_1,\ldots,b_{i-1}\}\vdash b_i$ for
every $i\in \{1,\ldots,b_l\}$.

Similarly, we write $T\svdash a$ if there exists a
\emph{derivation of $a$ from $T$ modulo XOR}, i.e., there
exists a sequence $b_1,\ldots,b_l$ of ground atoms such
that $b_l \sim a$ and for every $i\in \{1,\ldots,l\}$ there
exists a substitution $\sigma$ and a Horn clause
$a_1,\ldots,a_n\rightarrow a_0$ in $T$ such that
$a_0\sigma\sim b_i$ and for every $j\in \{1,\ldots,n\}$
there exists $k\in \{1,\ldots,i-1\}$ with $a_j\sigma\sim
b_k$. \emph{Incomplete derivations modulo XOR} are defined
analogously to the syntactic case.

Given $T$ and $a$, we call the problem of deciding whether $T\vdash a$
($T\svdash a$) is true, the \emph{deduction problem (modulo XOR)}. In
case $T$ models a protocol and the intruder (as described below), the
fact that $T\svdash a$, with $a=\I(t)$, is \emph{not} true means that
the term $t$ is secret, i.e., the intruder cannot get hold of $t$ even
when running an unbounded number of sessions of the protocol and using
algebraic properties of the XOR operator.


\subsection*{Modeling Protocols by Horn theories}
Following \cite{Blanchet-CSFW-2001}, we now illustrate how
Horn theories can be used to analyze cryptographic
protocols, where, however, we take the XOR operator into
account. While here we concentrate on secrecy properties,
authentication is discussed in
Section~\ref{sec:authentication}. As mentioned in the
introduction, the Horn theory approach allows us to analyze
the security of protocols w.r.t.~an unbounded number of
sessions and with no bound on the message size in a fully
automatic and sound way.  However, the algorithms are not
guaranteed to terminate and may produce false attacks.

\begin{figure}
    \small
        \begin{align*} 
            \I(x) &\ra \I(\hash(x)) 
                & \I(x), \I(y) &\ra \I(\an{x,y}) \\[0.3ex]
            \I(\an{x,y}) &\ra \I(x) 
                & \I(\an{x,y}) &\ra \I(y)\\[0.3ex]
            \I(x), \I(y) &\ra \I(\senc xy), 
                & \I(\senc xy), \I(y) &\ra \I(x) \\[0.3ex]
            \I(x), \I(\pub y) &\ra \I(\penc x{\pub y}), 
                & \I(\penc x{\pub y}), \I(y) &\ra \I(x) \\[0.3ex]
            \I(x), \I(y) &\ra \I(x\xor y)
        \end{align*}
        \caption{Intruder Rules.\label{intruder-rules}}
\end{figure}

A Horn theory for modeling protocols and the (Dolev-Yao)
intruder uses only the predicate $\I$.  The fact $\I(t)$
means that the intruder may be able to obtain the term $t$.
The fundamental property is that if $\I(t)$ cannot be
derived from the set of clauses, then the protocol
preserves the secrecy of $t$.  The Horn theory consists of
three sets of Horn clauses: the initial intruder facts, the
intruder rules, and the protocol rules. The set of
\emph{initial intruder facts} represents the initial
intruder knowledge, such as names of principals and public
keys.  The clauses in this set are facts, e.g., $\I(a)$
(the intruder knows the name $a$) and $\I(\pub{sk_a})$ (the
intruder knows the public key of $a$, with $sk_a$ being the
corresponding private key).  The set of \emph{intruder
  rules} represents the intruders ability to derive new
messages. For the cryptographic primitives mentioned above,
the set of intruder rules consists of the clauses depicted
in Figure~\ref{intruder-rules}. The last clause in this
figure will be called the \emph{$\xor$-rule}. It allows the
intruder to perform the XOR operation on arbitrary
messages. The set of \emph{protocol rules} represents the
actions performed in the actual protocol. The $i$th
protocol step of a principal is described by a clause of
the form $\I(r_1),\ldots,\I(r_i)\rightarrow \I(s_i)$ where
the terms $r_j$, $j\in \{1,\ldots,i\}$, describe the
(patterns of) messages the principal has received in the
previous $i{-}1$st steps plus the (pattern of the) message in
the $i$th step.  The term $\I(s_i)$ is the (pattern of) the
$i$th output message of the principal.  Given a protocol
$P$, we denote by $T_P$ the Horn theory that comprises all
three sets mentioned above.

Let us illustrate the above by a simple example protocol,
which we will use as a running example throughout this
paper. Applications of our approach to more complex
protocols are presented in Section~\ref{sec:experiments}.
We emphasize that the kind of Horn theories outlined above
are only an example of how protocols and intruders can be
modeled. As already mentioned in the introduction, our
methods applies to all $\xor$-linear Horn theories.

\subsection*{Running example} 
We consider a protocol that was proposed in
\cite{ChevalierKuestersRusinowitchTuruani-LICS-2003}. It is a variant
of the Needham-Schroeder-Lowe protocol in which XOR is employed. The
informal description of the protocol, which we denote by
$P_{\mathit{NSL}_{\xor}}$, is as follows:
\begin{protocol}
        \ab{$\penc{\an{N,A}}{\pub{sk_B}}$}
        \ba{$\penc{\an{M, N\xor B}}{\pub{sk_A}}$}
        \ab{$ \penc{M}{\pub{sk_B}}$}
\end{protocol}
where $N$ and $M$ are nonces generated by $A$ and $B$,
respectively. As noted in
\cite{ChevalierKuestersRusinowitchTuruani-LICS-2003}, this
protocol is insecure; a similar attack as the one on the
original Needham-Schroeder protocol can be mounted, where,
however, now the algebraic properties of XOR are
exploited.

To illustrate how this protocol can be modeled in terms of
Horn theories, let $\part$ be a set of participant names
and $\honest\subseteq\part$ be the set of names of the
honest participants. As proved in
\cite{ComonLundhCortier-SCP-2004}, for the secrecy property
it suffices to consider the case $\part = \{a,b\}$ and
$\honest = \{a\}$ (for authentication three participants
are needed).  In the following, $sk_a$, for $a\in\part$,
denotes the private key of $a$, $n(a,b)$ denotes the nonce
sent by $a\in\part$ to $b\in\part$ in message 1., and
$m(b,a)$ denotes the nonce generated by $b$ and sent to $a$
in message 2.

The initial intruder knowledge is the set
$
\{\I(a)\mid a\in \part\}\cup
\{\I(\pub{sk_a})\mid a\in \part\}\cup \{\I(sk_a)\mid a\in
\part\setminus \honest\}
$ 
of facts. The intruder rules are those depicted in
Figure~\ref{intruder-rules}.
The first step of the protocol performed by an honest
principal is modeled by the facts: 
$$
        \I(\penc{\an{n(a,b),a}}{\pub{sk_b}})
$$
for $a\in\honest$, $b\in\part$.  Note that it is not
necessary to model messages sent by dishonest principals,
since these are taken care of by the actions that can be
performed by the intruder.
    
The second step of the protocol performed by an honest
principal is modeled by the clauses: 
\begin{equation} \label{ckrt-fragile} 
\I(\penc{\an{x,
      a}}{\pub{sk_b}}) \ra \I(\penc{\an{m(b,a), x\xor
    b}}{\pub{sk_a}})
\end{equation}
for $b\in\honest$, $a\in\part$. The third step of the
protocol performed by an honest principal is modeled by the
clauses: 
\begin{equation} \label{ckrt-last}
        \I(\penc{\an{y, n(a,b)\xor b}}{\pub{sk_a}}) \ra \I(\penc{y}{\pub{sk_b}})
\end{equation}
for $a\in\honest$, $b\in\part$. The set of Horn clauses
defined above is denoted by $T_{P_{\mathit{NSL}_{\xor}}}$.
It is not hard to verify that we have
$T_{P_{\mathit{NSL}_{\xor}}}\svdash m(b,a)$ for every
$a,b\in \honest$.  In fact, secrecy of the nonces sent by
an honest responder to an honest initiator is not
guaranteed by the protocol
\cite{ChevalierKuestersRusinowitchTuruani-LICS-2003}.

    \section{Dominated Derivations} \label{sec:dominated}

In Section~\ref{sec:reduction}, we show how to reduce the
deduction problem modulo XOR to the one without XOR for
$\xor$-linear Horn theories, introduced below. This
reduction allows us to reduce the problem of checking
secrecy for protocols that use XOR to the case of protocols
that do not use XOR. (The authentication problem will be
considered in Section~\ref{sec:authentication}.) The latter
problem can then be solved by tools that cannot deal with
XOR, such as ProVerif. The class of protocol and intruder
capabilities that we can handle this way is quite large: It
contains all protocol and intruder rules that are
$\xor$-linear.

In this section, we prove a proposition that will be the key to the
reduction. Before we can state the proposition, we need to introduce
$\xor$-linear Horn theories and some further terminology.

A term is \emph{$\xor$-linear} if for each of its subterms
of the form $t\xor s$, where $t$ and $s$ may be standard or
non-standard terms, it is true that $t$ or $s$ is ground.
In other words, if a term $t$ contains a subterm of the
form $t_1\xor \cdots \xor t_n$ with $n\ge 2$, $t_i$
standard for every $i$, and there exists $i$ and $j$,
$i\not= j$, such that $t_i$ and $t_j$ are not ground, then
$t$ is not $\xor$-linear. For example, for variables
$x,y,z$ and a constant $a$, the term $t^1_{ex}=\an{a,a\xor
  \an{x,y}}$ is $\xor$-linear, but the term
$t^2_{ex}=\an{a,a\xor \an{x,y}\xor z}$ is not. A Horn
clause is called $\xor$-linear if each term occurring in
the clause is $\xor$-linear. A Horn theory is $\xor$-linear
if each clause in this theory, except for the $\xor$-rule
(see Fig.~\ref{intruder-rules}), is $\xor$-linear. In
particular, given a protocol $P$, the induced theory $T_P$
is $\xor$-linear if the sets of protocol and intruder
rules, except for the $\xor$-rule, are.

Our running example is an example of a protocol with an $\xor$-linear
Horn theory (note that, in \eqref{ckrt-fragile} and \eqref{ckrt-last},
$b$ is a constant); other examples are mentioned in
Section~\ref{sec:experiments}.  Also, many intruder rules are
$\xor$-linear. In particular, all those that do not contain the XOR
symbol. For example, in addition to the cryptographic primitives
mentioned in Figure~\ref{intruder-rules}, other primitives, such as
various kinds of signatures, encryption with prefix properties, and
MACs have $\xor$-linear intruder rules.

Besides $\xor$-linearity, we also need a more fine-grained
notion: $\C$-domination.  Let $\C$ be a finite set of
standard $\xor$-reduced ground terms such that $\C$ does
not contain two elements $m,m'$ with $m\not=m'$ and $m\sim
m'$.  (For the efficiency of our reduction
(Section~\ref{sec:reduction}), it is important to keep $\C$
as small as possible.) Let $\CC = \{ t \mid $ there exist
$c_1,\dots,c_n\in\C$ such that $t \sim c_1 \xor\cdots\xor
c_n \}$ be the $\xor$-closure of $\C$.  Note that
$0\in\CC$. Finally, let $\Csim = \{ t \mid t\sim t'\in\C,
t\mbox{ standard}\}$.
    
Now, a term is \emph{$\C$-dominated} if, for each of its
subterms of the form $t \xor s$, where $t$ and $s$ may be
standard or non-standard, it is true that $t$ or $s$ is in
$\CC$. For example, the term $t^1_{ex}$ from above is
$\{a\}$-dominated, but is is not $\{b\}$-dominated. The
term $t^2_{ex}$ is not $\{a\}$-dominated. A Horn clause is
$\C$-dominated, if the terms occurring in this clause are
$\C$-dominated; similarly for derivations. Finally, a Horn
theory $T$ is $\C$-dominated if each clause in $T$, except
for the $\xor$-rule, is $\C$-dominated. For example, we
have that the Horn theory $T_{P_{\mathit{NSL}_{\xor}}}$ of
our running example is $\{a,b\}$-dominated. (Recall that
$\part=\{a,b\}$.)

$\C$-dominated terms can also be characterized in terms of
what we call bad terms. We call a non-standard term $t$
\emph{bad} (w.r.t.~$\C$), if $t \sim c\xor t_1\xor
\dots\xor t_n$ for $c\in\CC$, pairwise $\xor$-distinct
standard terms $t_1,\dots,t_n\notin\Csim$, and $n>1$, where
$t$ and $t'$ are \emph{$\xor$-distinct} if $t\not\sim t'$.
A non-standard term which is not bad is called \emph{good}.
The following lemma is easy to see:
\begin{lemma}\label{lem:cdominatedbadterm}
 An $\xor$-reduced term is $\C$-dominated iff it contains no
bad subterms.
\end{lemma}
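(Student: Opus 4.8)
The plan is to reduce everything to the structure of a single maximal XOR-cluster. First I would observe that every subterm of $t$ of the form $t'\xor s'$ (the only subterms relevant to $\C$-domination) and every non-standard subterm of $t$ (the only candidates for bad subterms) lies inside a complete non-standard subterm of $t$. Writing such a complete non-standard subterm in $\xor$-reduced form as $u = u_1\xor\cdots\xor u_k$, with the $u_i$ standard and pairwise $\xor$-distinct, I would classify each summand according to whether $u_i\in\Csim$ or not, and let $q$ denote the number of summands with $u_i\notin\Csim$. The whole argument then rests on showing that $u$ is $\C$-dominated iff $q\le 1$, and that $u$ contains a bad subterm iff $q\ge 2$; since $t$ is $\C$-dominated iff each such $u$ is, and $t$ contains a bad subterm iff some such $u$ does, the lemma follows by combining the two equivalences over all complete non-standard subterms.

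The key technical step is a membership criterion for the $\xor$-closure. For a nonempty set $S$ of summands of $u$, let $u_S$ denote the $\xor$ of the terms in $S$; I claim $u_S\in\CC$ iff every term in $S$ lies in $\Csim$. The ``if'' direction holds because $\Csim\subseteq\CC$ (each element of $\Csim$ is $\sim$-equal to a single element of $\C$) and $\CC$ is closed under $\xor$. For ``only if'', suppose $u_S\sim c_1\xor\cdots\xor c_m$ with $c_1,\dots,c_m\in\C$. Since $u$ is $\xor$-reduced, the $\xor$-reduced form of $u_S$ is exactly $S$; by uniqueness of the $\xor$-reduced form modulo AC it also equals the reduced form of $c_1\xor\cdots\xor c_m$, which is a $\xor$ of pairwise distinct elements of $\C$ and hence contains only terms of $\Csim$. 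Therefore every summand in $S$ lies in $\Csim$. The same cancellation bookkeeping, applied to a bad subterm $s\sim c\xor t_1\xor\cdots\xor t_n$ (with $c\in\CC$, the $t_i\notin\Csim$ pairwise $\xor$-distinct, and $n\ge2$), shows that the non-$\Csim$ summands of $s$ are precisely $t_1,\dots,t_n$, so that the complete non-standard subterm $u$ enclosing $s$ has $q\ge2$.

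Given the criterion, the case $q\le1$ yields $\C$-domination at once: accumulate the $\Csim$-summands first---every such partial sum stays in $\CC$---and add the single non-$\Csim$ summand, if present, at the top, so that each binary node $t'\xor s'$ has an operand in $\CC$. For $q\ge2$ I would show by induction on the number of summands that no bracketing of $u$ is $\C$-dominated: writing the root as $u=L\xor R$, the two distinguished non-$\Csim$ summands either separate, whence the criterion gives $L\notin\CC$ and $R\notin\CC$ and the root itself violates domination, or they stay on one side, which then contains at least two non-$\Csim$ summands and fails by the induction hypothesis. Feeding these two equivalences ($u$ is $\C$-dominated iff $q\le1$, and $u$ contains a bad subterm iff $q\ge2$) into the reduction of the first paragraph proves both directions. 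I expect the membership criterion to be the main obstacle: the cancellation argument must be made fully rigorous by appealing to the uniqueness of the $\xor$-reduced form modulo AC and by verifying that terms outside $\Csim$ cancel neither against elements of $\C$ nor against one another---precisely the point where $\xor$-reducedness of $u$, the $\sim$-distinctness of the elements of $\C$, and the pairwise $\xor$-distinctness of the summands are all used.
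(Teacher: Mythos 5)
Your overall plan is workable and the technical core is right: the membership criterion for $\CC$ (proved via uniqueness of $\xor$-reduced forms modulo AC) is correct and is exactly what makes the lemma non-trivial, and the induction over bracketings for the case $q\ge 2$ is fine. The genuine problem lies in the two pivot equivalences you reduce everything to. As stated---``$u$ is $\C$-dominated iff $q\le 1$'' and ``$u$ contains a bad subterm iff $q\ge 2$''---both are false, because $q$ counts only the top-level summands $u_1,\dots,u_k$ of $u$, whereas $\C$-domination of $u$ and badness of subterms of $u$ also constrain terms buried \emph{inside} those standard summands. Concretely, let $\C=\{a\}$ and $u=a\xor\an{b,c\xor d}$ for pairwise distinct constants $b,c,d\notin\CC$. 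Then $q=1$ (the only summand outside $\Csim$ is $\an{b,c\xor d}$), yet $u$ is not $\C$-dominated and it does contain a bad subterm, namely $c\xor d$. So ``$q\le 1$'' does not imply domination, containing a bad subterm does not imply ``$q\ge 2$'', and the chain ``$t$ dominated iff every $u$ has $q\le 1$ iff no $u$ contains a bad subterm'' breaks at both middle links.

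The repair is local, and your own sub-arguments in fact prove the correct statements: everything in your second and third paragraphs (the criterion applied to partial sums, the $L\xor R$ induction, the accumulation argument) concerns only the $\xor$-spine of $u$, i.e.\ the non-standard subterms of $u$ reached from its root without passing through a standard summand. At that level the equivalences do hold: every spine node has an operand in $\CC$ iff $q\le 1$, and some spine node is bad iff $q\ge 2$ (for the latter you also need ``$q\ge 2$ implies $u$ itself is bad'', which you assert but never argue---it follows by taking the $c$ in the definition of badness to be the $\xor$ of the $\Csim$-summands of $u$). Since in an $\xor$-reduced term every non-standard subterm is a spine node of exactly one complete non-standard subterm (walk upwards through $\xor$-nodes until the parent is standard or absent), quantifying these spine-level equivalences over \emph{all} complete non-standard subterms yields the lemma; in the example above the offending subterm $c\xor d$ is then caught by its own cluster, which has $q=2$. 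Two smaller points to fix in the same pass: your $q\le 1$ argument is phrased as if you may choose the bracketing (``accumulate the $\Csim$-summands first''), but domination must be verified for the bracketing $u$ actually has---fortunately, at every node of any bracketing at least one side contains no non-$\Csim$ summand, so the criterion applies as is; and your ``precisely $t_1,\dots,t_n$'' bookkeeping silently requires that no $t_i\sim 0$ (a zero summand vanishes from the reduced form). The paper's definition of badness does not exclude this, and without that convention the lemma itself fails (e.g.\ $a\xor m$ would be ``bad'' via $t_1=0$, $t_2=m$), so you should state the convention explicitly.
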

There is an obvious
connection between $\xor$-linearity  and $\C$-domination:

\begin{lemma}\label{lem:xorlineardominated}
  For every $\xor$-linear term/Horn theory/deriva\-tion there
  exists a finite set $\C$ of standard $\xor$-reduced
  messages such that the term/Horn theory/derivation is
  $\C$-dom\-inated.
\end{lemma}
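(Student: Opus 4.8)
The plan is to prove the statement for a single $\xor$-linear term first, then lift it to Horn clauses, Horn theories, and derivations by taking unions. The key observation is that $\xor$-linearity is a statement about the \emph{shape} of the term (for each subterm $t\xor s$, one of $t,s$ is ground), whereas $\C$-domination requires that one of $t,s$ actually lie in the $\xor$-closure $\CC$. So the task is essentially to collect, for a given term, all the ground summands that witness $\xor$-linearity and throw them into $\C$; then the witnessing ground side of each $\xor$ will automatically be in $\CC$.

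\medskip

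\textbf{Step 1 (single term).} Let $t$ be $\xor$-linear and first replace it by its $\xor$-reduced form, which is well-defined modulo AC; $\xor$-linearity is preserved since reduction only deletes cancelling summands and the constant $0$, and can never make a previously-ground side non-ground. Now traverse $t$ and consider every complete non-standard subterm, i.e.\ every maximal subterm of the form $s_1\xor\cdots\xor s_m$ with the $s_i$ standard. By $\xor$-linearity, at most one of the $s_i$ is non-ground, so all the remaining $s_i$ are ground standard terms; reduce each such ground $s_i$ and collect it. Let $\C$ be the finite set of all ground standard $\xor$-reduced terms obtained this way across all complete non-standard subterms of $t$, pruned so that no two $\sim$-equivalent elements remain (this pruning is harmless because removing one of two $\sim$-equal elements does not shrink $\CC$). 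Then for any subterm $u\xor v$ of $t$, grouping the standard summands shows that either both sides are built from collected ground pieces—hence lie in $\CC$—or exactly one side is the single non-ground summand and the complementary side is a $\xor$ of collected ground terms, hence in $\CC$. Either way one of $u,v$ is in $\CC$, so $t$ is $\C$-dominated. Invoking Lemma~\ref{lem:cdominatedbadterm} is an alternative route: one checks directly that no complete non-standard subterm is \emph{bad}, since a bad term needs at least two standard summands outside $\Csim$ that are pairwise $\xor$-distinct and non-cancelling, and $\xor$-linearity forbids having two non-ground summands while the collected ground summands all end up in $\Csim$.

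\medskip

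\textbf{Step 2 (clauses, theories, derivations).} A Horn clause is $\xor$-linear iff each term in it is, and a clause/theory is $\C$-dominated iff each of its terms is $\C$-dominated for the \emph{same} $\C$. So I take the finite union of the sets $\C$ produced by Step~1 over the finitely many terms occurring in the (finitely many) clauses, again pruning $\sim$-duplicates. Since $\CC$ is monotone in $\C$, enlarging $\C$ preserves $\C$-domination of every term already dominated by a smaller set, so the union works simultaneously for all terms. This handles terms, clauses, and theories. For a derivation, note that a derivation is by definition $\xor$-linear iff all terms/atoms appearing in it are $\xor$-linear, so the same union-over-all-occurring-terms argument applies verbatim. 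The $\xor$-rule is exempt from $\C$-domination in the definition of a $\C$-dominated theory, matching the exemption in $\xor$-linearity, so it causes no trouble.

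\medskip

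\textbf{Main obstacle.} The only delicate point is that $\xor$-linearity guarantees one side of each $t\xor s$ is \emph{ground}, but $\C$-domination demands membership in $\CC$, and $\CC$ is a $\xor$-closure, not an arbitrary set of ground terms. The crux is therefore to argue that after grouping the standard summands of each complete non-standard subterm and reducing modulo the identities \eqref{eq:xor2}, the ground portion is genuinely a $\xor$ of elements I have placed into $\C$, so that it lands in $\CC$ by definition. Getting the bookkeeping right—that reduction does not introduce new non-ground material and that the single possibly-non-ground summand is exactly the side left outside $\CC$—is the substance of the proof; everything else is routine monotonicity and finite-union reasoning.
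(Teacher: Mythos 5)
Your overall strategy---harvest the ground standard summands that witness $\xor$-linearity into $\C$, then lift to clauses, theories, and derivations by finite unions and monotonicity of $\CC$ in $\C$---is exactly the idea behind the paper's own (essentially one-line) justification, which takes $\C$ to be the set of all ground standard terms occurring in the object; your $\C$ is even a refinement (only summands adjacent to an $\xor$), in the spirit of the paper's remark that $\C$ should be kept small. Step 2 is unproblematic.

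However, Step 1 contains a step that fails: replacing $t$ by its $\xor$-reduced form \emph{before} collecting. What you then establish is that the reduced form of $t$ is $\C$-dominated, whereas the lemma asserts that $t$ itself is; and $\C$-domination is a syntactic property of the given term which is \emph{not} invariant under $\sim$. Concretely, take $t=(x\xor b)\xor b$. This term is $\xor$-linear, its $\xor$-reduced form is $x$, so your construction yields $\C=\emptyset$; but $t$ is not $\emptyset$-dominated, since for its subterm $x\xor b$ neither side lies in the closure $\CC$ (which, for $\C=\emptyset$, contains only terms $\sim 0$). This matters downstream: Proposition~\ref{prop:good} is applied to the theory $T$ as given, and the implementation computes $\C$ for the input theory, so one cannot silently pass to $\sim$-representatives without an additional invariance argument that you do not supply. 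The repair is simple and lands you on the paper's argument: skip the initial reduction, collect the ground standard summands of every $\xor$-subterm of the \emph{original} term, $\xor$-reduce each collected summand individually, and prune $\sim$-duplicates; your grouping argument then goes through verbatim on $t$ itself, since each side of any subterm $u\xor v$ either consists entirely of collected ground summands (hence lies in $\CC$) or contains the unique non-ground summand, in which case the other side does.
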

The set $\C$ mentioned in the lemma could be chosen to be
the set of all ground standard terms occurring in the
term/Horn theory/derivation. However, $\C$ should be chosen
as small as possible in order to make the reduction
presented in Section~\ref{sec:reduction} more efficient.

As mentioned, the following proposition is the key to our
reduction. The proposition states that $\C$-dominated Horn
theories always allow for $\C$-dominated derivations.
Because of Lemma~\ref{lem:xorlineardominated},  the
proposition applies to all $\xor$-linear Horn theories.

\begin{proposition}\label{prop:good}
  Let $T$ be a $\C$-dominated Horn theory and $b$ be a
  $\C$-dominated fact.  If $T \svdash b$, then there
  exists a $\C$-dominated derivation modulo XOR for $b$
  from $T$.
\end{proposition}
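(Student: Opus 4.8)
The plan is to take an arbitrary derivation modulo XOR $\pi = b_1,\dots,b_l$ of $b$ from $T$ and transform it into a $\C$-dominated one, by ``cleaning'' the bad subterms that may occur in the $b_i$. The whole argument hinges on understanding two things: where bad subterms can \emph{enter} a derivation, and how a bad subterm, once present, can be \emph{removed} without destroying the validity of the later steps.

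First I would isolate a structural lemma: if $R=a_1,\dots,a_n\ra a_0$ is a $\C$-dominated clause and $\sigma$ is a substitution with $a_1\sigma,\dots,a_n\sigma$ all $\C$-dominated, then $a_0\sigma$ is $\C$-dominated as well. The two ingredients are that a standard term is never bad, and that the predicate ``is bad'' is invariant under $\xor$-ing with an element of $\CC$ (adding a $\CC$-summand changes only the part $c$ in the characterization $t\sim c\xor t_1\xor\cdots\xor t_n$, not the non-$\Csim$ summands $t_i$). Because $R$ is $\C$-dominated, every $\xor$ occurring in $R$ has one side in $\CC$, so instantiating the variables by $\C$-dominated (hence bad-subterm-free, by Lemma~\ref{lem:cdominatedbadterm}) terms cannot create a new bad subterm. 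The crucial consequence is: in any derivation modulo XOR over a $\C$-dominated theory, the \emph{first} atom that is not $\C$-dominated must be produced by the $\xor$-rule, since every other clause is $\C$-dominated and, by the lemma, turns $\C$-dominated premises into a $\C$-dominated conclusion.

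Next I would define a cleaning map $\phi$ on messages that, proceeding bottom-up, leaves standard symbols in place (recursing into arguments) and replaces each bad complete subterm $t\sim c\xor t_1\xor\cdots\xor t_n$ (with $c\in\CC$, $n>1$) by its $\CC$-part $c$. By construction $\phi(t)$ contains no bad subterm, hence is $\C$-dominated, and because $b$ is $\C$-dominated we have $\phi(b)\sim b$. The point of collapsing to $c$ rather than to $0$ is exactly the invariance above: $\phi$ then commutes with $\xor$-ing by $\CC$-elements, which is all that a $\C$-dominated clause does with its variables. Consequently, applying $\phi$ to every $b_i$ preserves every step that used a clause other than the $\xor$-rule, the cleaned conclusion being the same clause instantiated by the cleaned substitution.

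The hard part, and the only genuine obstacle, is the $\xor$-rule. A step $\I(s),\I(t)\ra\I(s\xor t)$ with $s,t$ good can produce a bad $s\xor t$ (when $s$ and $t$ carry distinct non-$\Csim$ summands $s_1\neq t_1$); then $\phi(s\xor t)$ is the pure $\CC$-element $c_s\xor c_t$, which in general is \emph{not} derivable from $\phi(s)$ and $\phi(t)$ by a single $\xor$-step. The resolution is to stop treating the $\xor$-rule pointwise and instead reorganize whole ``blocks'' of $\xor$-rule applications: such a block computes, in the free abelian group generated by the standard summands, the $\xor$ of a set of good terms $g_1,\dots,g_m$, and the outputs actually consumed by non-$\xor$ steps or needed for the good goal are themselves good. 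One can therefore re-schedule the pairwise $\xor$-steps so that terms sharing the same non-$\Csim$ summand are combined first (each such combination lands in $\CC$, hence is good), then all resulting $\CC$-elements are combined (staying in $\CC$), and finally the at most one surviving non-$\Csim$ summand is added; every intermediate result of this schedule is good. Splicing these re-scheduled blocks in for the cleaned $\xor$-steps yields a derivation all of whose atoms are $\C$-dominated and whose last atom is still $\sim b$. I expect the main difficulty to be the bookkeeping that makes this global: one has to argue, by induction on the number of non-$\C$-dominated atoms (or on a multiset measure counting surplus non-$\Csim$ summands) together with the first-bad-atom observation above, that the $\xor$-rule applications feeding any bad term can be grouped into such blocks and that the reorganization does not interfere with the interleaved $\C$-dominated steps.
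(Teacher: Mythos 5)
Your first two steps are sound and in fact coincide with the paper's machinery (your structural lemma is essentially Lemma~\ref{lem:badcopying}, and the commutation of a cleaning map with substitution is Lemma~\ref{lem:linrule}), but your treatment of the $\xor$-rule has a genuine gap, located exactly at the premise ``the outputs actually consumed by non-$\xor$ steps or needed for the good goal are themselves good.'' That premise is false: a $\C$-dominated clause can consume a bad term through one of its variables. Concretely, take $\C=\{c_1,c_2\}$, the facts $\I(c_1\xor h(k_1))$ and $\I(c_2\xor h(k_2))$, the clauses $\I(x)\ra \I(f(x))$ and $\I(f(x)),\I(x)\ra \I(s)$, and the $\xor$-rule. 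Then $\I(s)$ is derivable modulo XOR: the $\xor$-rule produces the bad atom $\I(t)$ with $t= c_1\xor c_2\xor h(k_1)\xor h(k_2)$, and both clauses are then applied with $x\mapsto t$. Here a bad block output is consumed by non-$\xor$ steps, so your rescheduling argument never gets started; worse, your cleaning map sends $t$ to its $\CC$-part $c_1\xor c_2$, and $\I(c_1\xor c_2)$ is \emph{not derivable at all} from this theory: viewing terms as $\mathbb{F}_2$-combinations of standard summands, cancelling $h(k_1)$ and $h(k_2)$ forces each of the two facts to be used an even number of times, so every such combination yields $0$, not $c_1\xor c_2$. Hence $\phi(\pi)$ is not a derivation and cannot be completed into one; no amount of bookkeeping repairs this, because the target atoms themselves are underivable.

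The missing idea is that whatever replaces a bad term must itself be justified by the derivation. The paper replaces each non-$\Csim$ summand $u$ of a bad term not by nothing (your choice) but by its \emph{type} $\type{u}$: the element of $\CC$ such that $\I(\type{u}\xor u)$ occurs earliest in $\pi$ (Definition~\ref{def:type}). Lemma~\ref{lem:firstsimple} shows that whenever a bad term with summand $u$ occurs, $\I(\type{u}\xor u)$ has already occurred, and Lemma~\ref{lem:cutcut} shows that the replacement $\I(c\xor \type{t_1}\xor\cdots\xor\type{t_n})$ is then derivable from the cleaned prefix by $\xor$-ing with those earlier atoms. In the example above this gives $\Delta(t)=c_1\xor c_2\xor \type{h(k_1)}\xor\type{h(k_2)}\sim 0$, and $\I(0)$ is indeed derivable (self-$\xor$ of a fact), which is why the paper's cleaned derivation goes through exactly where yours stops. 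Your map agrees with $\Delta$ only in the special case where the types of the dropped summands happen to cancel; in general, collapsing a bad term to its $\CC$-part is unsound.
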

Before we present the proof of this proposition, we
introduce some terminology, which is also used in
subsequent sections, and sketch the idea of the proof.
We write $t \ceq t'$ if $t' \sim c\xor t$ (or equivalently,
$c\xor t' \sim t$), for some $c\in\CC$.




For the rest of this section we fix a derivation $\pi$
modulo XOR for $b$ from $T$. W.l.o.g.~we may assume that
each term occurring in $\pi$ is in $\xor$-reduced form and
that each term in a substitution applied in $\pi$ is in
$\xor$-reduced form as well.

The key definitions for the proof of Proposition~\ref{prop:good} are
the following ones: 
\begin{definition}\label{def:type}
  For a standard term $t$, the set $\C$, and the derivation
  $\pi$, we define the \emph{type of $t$ (w.r.t. $\pi$ and
  $\C$)}, written $\type t$, to be an $\xor$-reduced
  element $c$ of $\CC$ such that $\pi(i) \sim \I(c\xor t)$ for some
  $i$, and for each $j<i$, it is not true that $\pi(j)\sim \I(c'\xor
  t)$ for some $c'\in\CC$.  If such an $i$ does not exist, we say that
  the type of $t$ is undefined.
\end{definition}
Note that the type of a term is uniquely determined
modulo AC and that equivalent terms (w.r.t.~$\sim$) have equivalent
types.

In the following definition, we define an operator which
replaces standard terms in bad terms which are not in
$\Csim$ by their types. This turns a bad term into a good
one. To define the operator, we use the following notation.
We write $\varphi_{\xor}[x_1,\ldots,x_n]$ for a term which
is built only from $\xor$, elements of $\Csim$, and the pairwise
distinct variables $x_1,\ldots,x_n$ such that each $x_i$ occurs
exactly once in $\varphi_{\xor}[x_1,\ldots,x_n]$. An example is
$\varphi^{ex}_{\xor}[x_1,x_2,x_3]=((x_1\xor x_2)\xor (a
\xor x_3))$, where $a\in\Csim$. For messages $t_1,\ldots,t_n$, we write
$\varphi_{\xor}[t_1,\ldots,t_n]$ for the message obtained from
$\varphi_{\xor}[x_1,\ldots,x_n]$ by replacing every $x_i$ by $t_i$,
for every $i\in \{1,\ldots,n\}$. Note that each non-standard term can
be expressed in the form $\varphi_\xor[t_1,\dots,t_n]$ for some
$\varphi_\xor$ as above and  standard terms $t_1,\dots,t_n \notin\Csim$.

\begin{definition}
  For a message $t$, we define $\Delta(t)$ as follows: If
  $t$ is a bad term of the form
  $\varphi_{\xor}[t_1,\ldots,t_n]$ for some
  $\varphi_{\xor}$ as above and standard terms
  $t_1,\dots,t_n \notin\Csim$, then $\Delta(t) =
  \varphi_{\xor}[\type t_1,\ldots,\type t_n]$; $\Delta(t)$
  is undefined, if one of those $\type t_i$ is undefined.
  Otherwise (if $t$ is good), we recursively apply $\Delta$
  to all direct subterms of~$t$.
\end{definition}

\noindent 
We will see (Lemma~\ref{lem:firstsimple}) that if $t$ occurs in $\pi$,
then the types of $t_i$ in the above definition are always
defined. Note also that $\Delta$ is defined with respect to the given
$\pi$ and $\C$.

Now, the main idea behind the proof of
Proposition~\ref{prop:good} is to apply $\Delta(\cdot)$ to
$\pi$. We then show that (i) $\Delta(\pi)$ is an incomplete
$\C$-dominated derivation modulo XOR for $b$ from $T$ and
(ii) to obtain a complete derivation only $\C$-dominated
terms are needed. The details of the proof are presented
next, by a series of lemmas, some of which are also used in
Section~\ref{sec:reduction}.

\parag{Proof of Proposition \ref{prop:good}}
The following lemma is easy to show by structural induction
on $s$: 
\begin{lemma}\label{lem:xorreducedbadterm}
 Let $s$ and $t$ be messages such that $s$ is $\xor$-reduced,
 $s$ contains a complete bad subterm $s'$, and $s\sim
 t$. Then, there exists a complete bad subterm $t'$ of $t$
 such that $t'\sim s'$. 
\end{lemma}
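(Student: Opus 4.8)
The plan is to prove the statement by structural induction on $s$, distinguishing whether the $\xor$-reduced term $s$ is standard or non-standard. The only facts about $\sim$ that are needed are: (a) $\sim$ never changes the outermost non-$\xor$ symbol, so if $s$ is standard with top symbol $f\neq\xor$ and $s\sim t$, then $t$ is standard with the same top symbol $f$ and its arguments are pairwise $\sim$-equivalent to those of $s$; and (b) badness of a non-standard term depends only on its $\sim$-class. Both are immediate from the definition of $\sim$ and of bad terms.

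First I would treat the standard case. Write $s=f(s_1,\dots,s_k)$ with $f\neq\xor$, so that $t=f(t_1,\dots,t_k)$ with $s_i\sim t_i$ by (a). Since $s'$ is bad it is non-standard, hence $s'\neq s$, so $s'$ sits as a direct subterm of some standard term occurring in $s$; as $s$ itself is standard, this forces $s'$ to be a complete bad subterm of one of the $s_i$. Each $s_i$ is $\xor$-reduced (being a subterm of the $\xor$-reduced $s$) and a proper subterm of $s$, so the induction hypothesis applied to $s_i\sim t_i$ yields a complete bad subterm $t'$ of $t_i$ with $t'\sim s'$. It then remains to observe that $t'$ is complete in $t$: either $t'$ is a direct subterm of a standard term already lying inside $t_i$ (hence inside $t$), or $t'=t_i$, in which case $t'$ is a direct subterm of the standard term $t$; both give completeness in $t$.

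The main work is the non-standard case, where $s=s_1\xor\cdots\xor s_m$ with $m\ge 2$ and pairwise $\xor$-distinct standard summands $s_i$, none equal to $0$ --- this being exactly what $\xor$-reducedness of a non-standard term means. Since $s$ is $\xor$-reduced and $s\sim t$, the $\xor$-reduced form of $t$ equals $s$ modulo AC; in particular $t$ must itself be non-standard, and flattening the top-level $\xor$ of $t$ into standard summands and reducing them shows that for every $s_i$ there is a top-level standard summand $u$ of $t$ with $u\sim s_i$. If $s'=s$, then by (b) the non-standard term $t\sim s=s'$ is bad, and $t'=t$ already works. Otherwise $s'\neq s$, and as in the standard case $s'$ is a complete bad subterm of some $s_i$; applying the induction hypothesis to $s_i\sim u$ (note $s_i$ is a proper, $\xor$-reduced subterm of $s$) produces a complete bad subterm $t'$ of $u$ with $t'\sim s'$.

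The decisive point is then the transfer of completeness from $u$ to $t$, and here badness is used crucially: since $t'$ is bad it is non-standard, so it cannot coincide with the standard summand $u$, and therefore $t'$ must occur inside $u$ as a direct subterm of some standard term; as $u$ is a subterm of $t$, the same standard term occurs in $t$, whence $t'$ is a complete bad subterm of $t$ with $t'\sim s'$, as required. I expect the only delicate step to be the bookkeeping in the non-standard case, namely arguing that the reduced form of $t$ exhibits each $s_i$ as a top-level standard summand of $t$ up to $\sim$; once this is in place, the completeness transfer and the induction go through routinely.
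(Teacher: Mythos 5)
Your fact (a) is false, and the standard case of your induction collapses with it. The relation $\sim$ does \emph{not} preserve the outermost symbol: the unit and cancellation axioms make any standard term equivalent to non-standard ones, e.g.\ $s \sim s\xor 0 \sim s\xor c\xor c$. So from ``$s=f(s_1,\dots,s_k)$ standard and $s\sim t$'' you may not conclude $t=f(t_1,\dots,t_k)$; the top symbol of $t$ may well be $\xor$. This is not a corner case: it is exactly the situation the lemma exists for. If $t$ were also $\xor$-reduced, then $t$ would coincide with $s$ modulo AC and the statement would be immediate; the whole content of the lemma is that $t$ may contain redexes such as $u\xor u$ and $u\xor 0$ (that is how it is used in the paper, e.g.\ on instances $r\theta$ that need not be reduced). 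Concretely, your proof has no argument for $s=\an{a\xor b,d}$ and $t=\an{a\xor b,d}\xor c\xor c$ with $a,b\notin\Csim$, nor for the nested pair $s=\an{\an{a\xor b,d},e}$, $t=\an{\an{a\xor b,d}\xor c\xor c,\,e}$, where the mismatch (standard $s_i$ versus non-standard $t_i$) appears during your recursion even though both top-level terms are standard.

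Moreover, the missing case cannot be patched inside your induction scheme (nor, taken literally, inside the ``structural induction on $s$'' that the paper itself suggests). When $s$ is standard and $t$ is non-standard, flattening $t$ into standard summands $u_1\xor\cdots\xor u_m$ and reducing shows that some summand $u_j$ satisfies $u_j\sim s$ while the remaining summands cancel among themselves; the desired bad subterm must then be located inside $u_j$, which means invoking the claim for the pair $(s,u_j)$ --- the \emph{same} $s$ with a structurally smaller $t$. Structural induction on $s$ alone does not license that call. The repair is to induct on $t$, or lexicographically on the pair $(s,t)$: then your non-standard case, and your (correct, and genuinely useful) observation that a bad term is non-standard and hence cannot coincide with a standard summand, go through essentially unchanged, with the extra case ``$s$ standard, $t$ non-standard'' handled by descending into the surviving summand $u_j$.
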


The following lemma, whose proof can be found in the appendix,
says that when substituting variables in a $\C$-dominated term, then
complete bad terms that might have been introduced by the substitution
cannot be canceled out by the $\C$-dominated term. 

\begin{lemma}\label{lem:presceq}
    Let $r\theta \sim t$, for a term $t$, an $\xor$-reduced
    substitution $\theta$, and a $\C$-dominated term $r$. Then, for
    each complete bad subterm $r'$ of $r\theta$ there exists a
    complete (bad) subterm $t'$ of $t$ such that $t'\sim r'$.
\end{lemma}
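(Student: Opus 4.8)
The plan is to reduce the statement to the already-established $\xor$-reduced case, Lemma~\ref{lem:xorreducedbadterm}. Let $u$ be the $\xor$-reduced form of $r\theta$; then $u$ is $\xor$-reduced and $u\sim r\theta\sim t$. If I can show that the $\xor$-reduced form $\widehat{r'}$ of the complete bad subterm $r'$ occurs as a complete subterm of $u$, then, since $\widehat{r'}\sim r'$ is again bad, Lemma~\ref{lem:xorreducedbadterm} applied to $u\sim t$ yields a complete bad subterm $t'$ of $t$ with $t'\sim\widehat{r'}\sim r'$, as required. Thus the whole difficulty is concentrated in one claim: \emph{reducing $r\theta$ to $\xor$-reduced form does not destroy the complete bad subterm $r'$.}

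To make this precise, note first that because $r'$ is bad, its $\xor$-reduced form $\widehat{r'}$ still has at least two distinct standard summands outside $\Csim$, so $\widehat{r'}$ is non-standard. Hence the only way $\widehat{r'}$ could fail to be a \emph{complete} non-standard subterm of $u$ is if the standard term $s$ of which $r'$ is a direct subterm in $r\theta$ is deleted while $r\theta$ is reduced; equivalently, if $s$ is a summand of some maximal $\xor$-subterm $w$ of $r\theta$ that contains a second summand equivalent to $s$, so that the two cancel. (If the parent of $s$ in $r\theta$ is standard, then $s$ is not a summand of any $\xor$-subterm and cannot be cancelled, and if $r'=r\theta$ there is nothing above it; in both cases $\widehat{r'}$ obviously survives.)

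The decisive use of $\C$-domination is to rule out this cancellation of $s$. I would locate the maximal $\xor$-subterm $w$ of $r\theta$ having $s$ as a summand and distinguish two cases according to where $w$ sits. If $w$ lies entirely inside a single $\theta(x)$, then $w$ is $\xor$-reduced, as $\theta$ is, and therefore has no two equivalent summands at all, so $s$ is not cancelled. Otherwise $w$ arises from a maximal $\xor$-subterm $W$ in the skeleton of $r$. Here $\C$-domination of $r$ forces $W$ to have at most one non-ground (i.e.\ $\notin\CC$) standard summand $u^*$, all of its remaining summands lying in $\CC$: indeed, two distinct non-ground summands would produce a binary $\xor$ inside $r$ with neither side in $\CC$. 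Applying $\theta$, every summand of $w$ outside $\CC$ must therefore originate from the single term $u^*\theta$. Since $s$ contains the bad subterm $r'$, and a bad term is never in $\CC$ (elements of $\CC$ have all their summands in $\Csim$, whereas a bad term has more than one summand outside $\Csim$), the summand $s$ comes from $u^*\theta$ and lies outside $\Csim$; consequently no ground $\CC$-summand of $w$, which is itself in $\Csim$, can be equivalent to $s$. Finally, $u^*\theta$ contributes no repeated copy of $s$ either, being either a single standard summand (if $u^*$ is a non-ground standard term) or the $\xor$-reduced term $\theta(x)$ with pairwise distinct summands (if $u^*=x$). Hence $s$ has no cancellation partner in $w$ and is not deleted.

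This establishes the claim, so $\widehat{r'}$ is a complete bad subterm of the $\xor$-reduced term $u\sim t$, and Lemma~\ref{lem:xorreducedbadterm} completes the argument. I expect the main obstacle to be exactly the non-cancellation argument of the previous paragraph: the bookkeeping needed to show that, thanks to $\C$-domination, every maximal $\xor$-subterm of $r\theta$ splits into a ground $\CC$-part inherited from the skeleton of $r$ and the contribution of at most one non-ground summand, so that the standard context enclosing a (necessarily non-$\CC$) bad subterm cannot be cancelled during reduction.
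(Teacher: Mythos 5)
Your architecture is genuinely different from the paper's: the paper proves this lemma by structural induction on $r$, with cases mirroring the definition of $\C$-domination ($r$ a variable, $r$ standard, $r\in\CC$, $r\topAC c\xor r''$, $r\topAC c\xor x$), invoking Lemma~\ref{lem:xorreducedbadterm} only in the variable case; you instead make one global ``survival under normalization'' argument and invoke Lemma~\ref{lem:xorreducedbadterm} once at the end. Your structural claim about dominated terms (every maximal $\xor$-subterm has at most one summand outside $\CC$, the rest in $\CC$) is correct and is the right engine. But there is a genuine gap in the reduction step: you declare that $\widehat{r'}$ fails to survive \emph{only if} the immediate standard parent $s$ of $r'$ is itself a cancelled summand of some $\xor$-sum, and you assert that if the parent of $s$ is standard then $s$ ``cannot be cancelled.'' That equivalence is false: $s$ is also deleted whenever some standard \emph{ancestor} of $s$ is a summand of an $\xor$-sum higher up in $r\theta$ and cancels there (in $\an{s,d}\xor\an{s,d}$ the parent of $s$ is standard, yet $s$ vanishes). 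Your non-cancellation analysis must therefore be applied to every standard ancestor of $r'$, organized bottom-up along the reduction; this is salvageable, since each such ancestor still contains $r'$ as a complete bad subterm, so your two cases (sum inside a single $\theta(x)$ versus sum $W\theta$ with $W$ from the skeleton of $r$) still apply, but as written the proof simply does not cover the failure mode.

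The second gap is in how you exclude cancellation against the ground $\CC$-summands: you argue that $s$, containing a bad subterm, ``lies outside $\Csim$.'' That inference is unsound under the paper's definitions. $\C$ is merely a finite set of standard $\xor$-reduced ground terms with no two equivalent elements; nothing forces elements of $\C$ to be $\C$-dominated, so an element of $\Csim$ may perfectly well contain a bad subterm (take $\an{m_1\xor m_2,b}\in\C$ with $m_1,m_2\notin\Csim$ distinct). Badness of $r'$ gives you $r'\notin\CC$, but not that terms \emph{containing} $r'$ avoid $\Csim$. The sound argument --- and the one the paper's own proof uses --- is that the $\CC$-summands in question are subterms of the $\C$-dominated term $r$, hence contain no bad subterms at all; if one of them were equivalent to $s$, Lemma~\ref{lem:xorreducedbadterm} would transfer the complete bad subterm of (the reduced form of) $s$ into it, a contradiction. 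With these two repairs --- extending the analysis to all ancestors, and replacing the $\Csim$ argument by $\C$-domination of $r$'s skeleton --- your strategy does go through.
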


We now show (see the appendix) that if an instance of a
$\C$-dominated term contains a complete bad subterm, then this term
(up to $\ceq$) must be part of the substitution with which the
instance was obtained.

\begin{lemma}\label{lem:badtermsubstitution}
  Let $\theta$ be a ground substitution and $s$ be a $\C$-dominated
  term. Assume that $t$ is a complete bad subterm of $s\theta$. Then,
  there exists a variable $x$ and a complete bad subterm $t'$ of
  $\theta(x)$ such that $t'\ceq t$.
\end{lemma}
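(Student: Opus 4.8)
The plan is to prove the statement by structural induction on the $\C$-dominated term $s$, locating the complete bad subterm $t$ by following how $\theta$ was substituted into $s$. Throughout I may assume, by the conventions of this section and Lemma~\ref{lem:xorreducedbadterm}, that $s$, $\theta$, and all terms under consideration are $\xor$-reduced, so that ``complete bad subterm'' can be read off syntactically and transported across $\sim$ when needed. For the base cases: if $s$ is a variable $x$, then $s\theta=\theta(x)$ and $t$ is already a complete bad subterm of $\theta(x)$ with $t\ceq t$ (as $t\sim 0\xor t$ and $0\in\CC$); if $s$ is a constant, then $s\theta=s$ is ground and $\C$-dominated, so by Lemma~\ref{lem:cdominatedbadterm} it has no bad subterm and the claim holds vacuously.

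For the inductive step with standard top symbol, write $s=f(s_1,\dots,s_k)$ with $f\neq\xor$. Since $t$ is non-standard it cannot equal $s\theta$, so $t$ occurs strictly inside some $s_i\theta$; as the standard parent of $t$ then also lies inside $s_i\theta$, the term $t$ is a complete bad subterm of $s_i\theta$. Each $s_i$ is a $\C$-dominated proper subterm of $s$, so the induction hypothesis applied to $s_i$ and $\theta$ yields a variable $x\in\var(s_i)\subseteq\var(s)$ and a complete bad subterm $t'$ of $\theta(x)$ with $t'\ceq t$.

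The crux is the case where the top symbol of $s$ is $\xor$. Here the key observation, proved by peeling off the top $\xor$-layer of $s$ and using that at every $\xor$-node of a $\C$-dominated term one side lies in $\CC$, is the decomposition $s\sim\gamma\xor w$, where $\gamma\in\CC$ and $w$ is the unique ``escaping'' leaf of the top layer, i.e.\ a single variable or a single standard term with $w\notin\CC$ (the degenerate case $s\in\CC$ being handled as in the base cases). Since $\gamma\in\CC$ is ground, all standard summands of its reduced form lie in $\Csim$; hence they are neither bad (Lemma~\ref{lem:cdominatedbadterm}, as the $\C$-leaves of $\gamma$ are $\C$-dominated and $\xor$-reduced) nor able to cancel a summand outside $\Csim$. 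I then classify the position of $t$ in $s\theta\sim\gamma\xor w\theta$. If $t$ is the complete top-layer term, then its non-$\Csim$ standard summands all come from $w\theta$: when $w$ is standard this gives at most one such summand, contradicting $n>1$ in the definition of bad, so this subcase is impossible; when $w$ is a variable $x$, the extra summands force $\theta(x)$ to be non-standard, and a short $\xor$-calculation shows $\theta(x)$ is itself bad with $\theta(x)\ceq t$, so $t'=\theta(x)$ works. If instead $t$ lies strictly inside $w\theta$, then either $w$ is a variable $x$, whence $t$ is a complete bad subterm of $\theta(x)$ and $t'=t$ suffices, or $w$ is standard, whence $t$ is a complete bad subterm of $w\theta$ and the induction hypothesis applied to the smaller $\C$-dominated term $w$ furnishes the desired $x$ and $t'$. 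No complete bad subterm can sit strictly inside $\gamma$, since its $\C$-leaves are $\C$-dominated and $\xor$-reduced.

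The main obstacle I expect is the structural decomposition lemma for $\C$-dominated $\xor$-terms, namely that the top $\xor$-layer has a single escaping leaf $w$ with everything else collapsing into $\CC$; making this precise requires care because $\CC$ itself is closed under $\xor$ and its elements may be non-standard. The secondary difficulty is the bookkeeping modulo $\sim$: substitution can trigger cancellations in the top layer, and one must argue that the ground $\Csim$-part of $s$ can neither create a bad term on its own nor erase the non-$\Csim$ summands that witness badness, which is exactly where Lemma~\ref{lem:cdominatedbadterm} (and, for transporting complete bad subterms across $\sim$, Lemma~\ref{lem:xorreducedbadterm} and Lemma~\ref{lem:presceq}) is used.
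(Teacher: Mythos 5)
Your proof is correct and follows essentially the same route as the paper's: structural induction on $s$ with the cases variable, standard top symbol, $s\in\CC$, $s\sim c\xor s'$ for standard non-variable $s'$, and $s\sim c\xor x$ — your ``escaping leaf'' decomposition $s\sim\gamma\xor w$ is exactly the paper's case split, and your handling of each case (impossibility of a bad top layer when $w$ is standard, $t'=\theta(x)$ when $t\sim s\theta$ with $w$ a variable, and the induction hypothesis otherwise) matches the paper's argument. The only difference is presentational: you state the top-layer decomposition as an explicit auxiliary claim, whereas the paper enumerates the resulting cases directly using its $\topAC$ notation.
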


The converse of Lemma~\ref{lem:badtermsubstitution} is also
easy to show by structural induction on $s$.

 \begin{lemma}\label{lem:badtermsubstitutionconverse}
  Let $\theta$ be a ground substitution and $s$ be a
  $\C$-dominated term. If $s\theta$ is $\C$-dominated, then
  so is $\theta(x)$ for every $x\in \var(s)$. 
 \end{lemma}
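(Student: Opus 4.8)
The plan is to prove this by structural induction on $s$, as the paper suggests. The single fact I will lean on throughout is that \emph{$\C$-domination is inherited by subterms}: if a term $u$ is $\C$-dominated, then so is every subterm of $u$. This is immediate from the definition, since any subterm of the form $t\xor t'$ that occurs inside a subterm of $u$ also occurs in $u$, and hence already satisfies the domination condition (one of $t,t'$ lies in $\CC$). I will apply this inheritance fact both to $s$ (to peel off its immediate subterms) and to $s\theta$ (to peel off theirs).

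For the base cases, if $s$ is a variable $x$ then $s\theta=\theta(x)$ is $\C$-dominated by assumption and $\var(s)=\{x\}$, so the claim holds; if $s$ is a constant or $0$ then $\var(s)=\emptyset$ and there is nothing to prove. For the inductive step, write $s=f(s_1,\dots,s_k)$ for an arbitrary top symbol $f$. The argument is uniform and does not need to distinguish whether $f$ is $\xor$ or a standard symbol, which is what makes the induction so short. Since $s\theta=f(s_1\theta,\dots,s_k\theta)$, each $s_i\theta$ is a direct subterm of $s\theta$ and each $s_i$ is a direct subterm of $s$. By the inheritance fact, each $s_i$ is $\C$-dominated (being a subterm of the $\C$-dominated $s$) and each $s_i\theta$ is $\C$-dominated (being a subterm of the $\C$-dominated $s\theta$). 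Applying the induction hypothesis to each $s_i$ with the substitution $\theta$ yields that $\theta(x)$ is $\C$-dominated for every $x\in\var(s_i)$, and since $\var(s)=\bigcup_{i=1}^{k}\var(s_i)$, this establishes the claim for $s$.

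I do not expect a genuine obstacle here; the only point requiring care is conceptual rather than computational. $\C$-domination is a \emph{syntactic} condition on the subterms of a term that have the shape $t\xor t'$, whereas membership in $\CC$ is read modulo $\sim$. In this direction of the equivalence I never need to reason about $\CC$-membership at all: I only read off that the subterms of $s\theta$, and in particular the $\xor$-subterms occurring inside each $\theta(x)$, inherit the domination property. Because $\theta$ is a ground substitution applied syntactically, the subtree $\theta(x)$ appears verbatim in $s\theta$ wherever $x$ occurs in $s$, so its internal $\xor$-subterms are genuine $\xor$-subterms of $s\theta$ and are therefore dominated.

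It is worth noting that this is precisely why the present direction is the easy one and the converse, Lemma~\ref{lem:badtermsubstitution}, is the hard one: here I merely \emph{restrict} an already established domination property to subterms, whereas there one must \emph{locate} a complete bad subterm up to $\ceq$ inside the substitution, which requires the more delicate cancellation analysis of Lemmas~\ref{lem:xorreducedbadterm} and~\ref{lem:presceq}.
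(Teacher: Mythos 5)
Your proof is correct and follows the same route the paper indicates for this lemma, namely structural induction on $s$ using the fact that $\C$-domination (being a syntactic condition on all subterms of the form $t\xor t'$) is inherited by subterms. In fact, your own key observation already gives a one-line proof without any induction: since $\theta$ is applied syntactically, $\theta(x)$ occurs verbatim as a subterm of $s\theta$ for every $x\in\var(s)$, so subterm inheritance applied to $s\theta$ immediately yields that $\theta(x)$ is $\C$-dominated.
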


Similarly to Lemma~\ref{lem:badtermsubstitution}, we can
prove the following lemma. The main observation is that
$\Delta(c\xor t) \sim c \xor \Delta(t)$, for $c\in\CC$.

\begin{lemma} \label{lem:linrule} 
  $\Delta(s\theta) \sim s(\Delta\theta)$, for a $\C$-dominated term
  $s$ and a substitution $\theta$.
\end{lemma}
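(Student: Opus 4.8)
The plan is to prove the identity by structural induction on $s$, following the pattern of Lemma~\ref{lem:badtermsubstitution}, and to isolate the equation $\Delta(c\xor t)\sim c\xor\Delta(t)$ for $c\in\CC$ as the one nontrivial ingredient. Throughout I take $\theta$ to be ground (so that $s\theta$ and each $\theta(x)$ are messages on which $\Delta$ is defined), write $\Delta\theta$ for the substitution with $(\Delta\theta)(x)=\Delta(\theta(x))$, and use that $\Delta$ respects $\sim$ (equivalent messages have equivalent types, hence equivalent $\Delta$-images) and that the relevant types are defined (Lemma~\ref{lem:firstsimple}).

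For the base cases: if $s$ is a variable $x$, then $\Delta(s\theta)=\Delta(\theta(x))=(\Delta\theta)(x)=s(\Delta\theta)$ directly from the definition of $\Delta\theta$; if $s$ is a constant, it is standard and ground, hence good with no proper subterms, so both $\Delta(s\theta)$ and $s(\Delta\theta)$ equal $s$. If $s=f(s_1,\dots,s_k)$ is a standard, non-atomic term, then $s\theta$ is again standard and therefore good, so $\Delta$ recurses through the top symbol, giving $\Delta(s\theta)=f(\Delta(s_1\theta),\dots,\Delta(s_k\theta))$. The induction hypothesis applied to each direct subterm $s_i$ (which is itself $\C$-dominated) yields $\Delta(s_i\theta)\sim s_i(\Delta\theta)$, and congruence of $\sim$ closes this case.

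The interesting case is $s=s_1\xor s_2$. Since $s$ is $\C$-dominated, one of its two operands lies in $\CC$; by commutativity assume $s_2\in\CC$, so $s_2$ is ground and $s\theta\sim s_2\xor s_1\theta$. Applying the key observation with $c=s_2$ and $t=s_1\theta$ gives $\Delta(s\theta)\sim s_2\xor\Delta(s_1\theta)$, and the induction hypothesis on the direct subterm $s_1$ (again $\C$-dominated) gives $\Delta(s_1\theta)\sim s_1(\Delta\theta)$. Since $s_2$ is ground, $s_2(\Delta\theta)=s_2$, so $\Delta(s\theta)\sim s_2\xor s_1(\Delta\theta)=(s_1\xor s_2)(\Delta\theta)=s(\Delta\theta)$, as required.

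I expect the main obstacle to be the key observation $\Delta(c\xor t)\sim c\xor\Delta(t)$ for $c\in\CC$, which is exactly what makes the $\xor$ case go through. The point is that XOR-ing a term with an element of $\CC$ only alters its $\Csim$-component and leaves the standard summands $t_1,\dots,t_n\notin\Csim$ (in the representation $t\sim\varphi_\xor[t_1,\dots,t_n]$) unchanged modulo reduction; hence their types, and thus the values that $\Delta$ substitutes for them, coincide for $t$ and for $c\xor t$. One must verify this uniformly across the cases in which $t$ (respectively $c\xor t$) is good or bad and in which summands in $\Csim$ get cancelled or created, using Lemmas~\ref{lem:xorreducedbadterm} and~\ref{lem:cdominatedbadterm} to track how completeness and badness of the relevant subterms are preserved. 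Once this observation is established, the three cases above assemble into the claim.
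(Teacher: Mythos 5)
Your proposal is correct and takes essentially the same route as the paper: the paper's own proof consists precisely of the remark that the lemma follows by the same structural induction as Lemma~\ref{lem:badtermsubstitution}, with the main observation $\Delta(c\xor t)\sim c\xor\Delta(t)$ for $c\in\CC$, which is exactly the decomposition you use (and you work out the induction cases in more detail than the paper does, at the same level of rigor for the key observation itself).
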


Another basic and simple to prove property of $\Delta$ is
captured in the following lemma. 

\begin{lemma}\label{lem:equivalentdelta}
 Let $s$ and $t$ be terms such that $s\sim t$. Then,
 $\Delta(s)\sim \Delta(t)$. 
\end{lemma}

The following lemma says that if an instance of a
$\C$-dominated Horn clause contains a complete bad subterm
on its right-hand side, then this term (up to $\ceq$)
already occurs on the left-hand side.
\begin{lemma} \label{lem:badcopying} Assume that $p_1(r_1),
  \dots, p_n(r_n) \ra p_0(s)$ is a $\C$-dominated Horn
  clause, $\theta$ is an $\xor$-reduced ground
  substitution, $w, u_1, \dots, u_n$ are $\xor$-reduced
  messages such that $w \sim s\theta$ and $u_i \sim
  r_i\theta$, for $\rang i1n$.

        If $w'$ is a complete bad subterm of $w$, then there exists a
        complete bad subterm $u'$ of $u_i$, for some
        $\rang i1n$, such that $u'\ceq w'$.
    \end{lemma}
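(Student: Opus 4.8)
The plan is to trace the complete bad subterm $w'$ backwards through the equivalence $w \sim s\theta$ and the substitution $\theta$, and then forward again into one of the $u_i$, crucially invoking the standing assumption that every variable on the right-hand side of a Horn clause also occurs on the left-hand side. The intuition is that a bad term on the right-hand side cannot be created out of thin air: the clause's own terms are $\C$-dominated and hence, by Lemma~\ref{lem:cdominatedbadterm}, contain no bad subterms, so any bad subterm of the instance $s\theta$ must originate in $\theta$, and the shared-variable condition forces it to reappear in some $r_i\theta$, and thus in $u_i$.

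First I would locate the bad term inside the substitution. Since $w$ is $\xor$-reduced with complete bad subterm $w'$ and $w \sim s\theta$, Lemma~\ref{lem:xorreducedbadterm} yields a complete bad subterm $v$ of $s\theta$ with $v \sim w'$. As the clause is $\C$-dominated, $s$ is $\C$-dominated, so Lemma~\ref{lem:badtermsubstitution} applied to $s$ and $\theta$ produces a variable $x \in \var(s)$ together with a complete bad subterm $t'$ of $\theta(x)$ such that $t' \ceq v$; combined with $v \sim w'$ this gives $t' \ceq w'$. By the variable condition on Horn clauses, $x \in \var(s) \subseteq \bigcup_i \var(r_i)$, so $x$ occurs in $r_i$ for some $i$, and I fix such an $i$.

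Next I would promote $t'$ to a complete bad subterm $z$ of $r_i\theta$ with $z \ceq w'$. If $t'$ is a proper complete subterm of $\theta(x)$ (occurring directly beneath a standard symbol inside $\theta(x)$), then that same standard symbol sits in $r_i\theta$, so $t'$ is already complete in $r_i\theta$ and I set $z := t'$. Likewise, if $t' = \theta(x)$ and the chosen occurrence of $x$ in $r_i$ lies directly beneath a standard symbol (or is all of $r_i$), then $z := t' = \theta(x)$ is complete in $r_i\theta$. The one remaining, and main, obstacle is the case $t' = \theta(x)$ with $x$ occurring directly beneath a $\xor$ in $r_i$: here $\theta(x)$ is merged into a $\xor$-sum and is no longer a syntactic complete subterm. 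To handle it, I would consider the complete non-standard subterm $\rho$ of $r_i$ containing this occurrence of $x$. Since $\rho$ is $\C$-dominated and $x$ is a non-ground (hence non-$\CC$) summand, repeatedly splitting the top $\xor$ forces each off-path side into $\CC$, whence $\rho \sim x \xor c$ for some $c \in \CC$. Then $\rho\theta \sim t' \xor c$, which is again bad (adding a $\CC$-term only alters the $c$-part of the badness witness) and satisfies $\rho\theta \ceq t' \ceq w'$; moreover $\rho\theta$ is complete in $r_i\theta$ because $\rho$ is complete in $r_i$. So I set $z := \rho\theta$.

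Finally I would carry $z$ over to $u_i$. Since $r_i$ is $\C$-dominated, $\theta$ is $\xor$-reduced, and $r_i\theta \sim u_i$, Lemma~\ref{lem:presceq} gives a complete bad subterm $u'$ of $u_i$ with $u' \sim z$. Together with $z \ceq w'$ this yields $u' \ceq w'$, as required. I expect the $\xor$-merging case in the third paragraph to be the crux, since it is precisely where $\C$-domination (rather than mere syntactic structure) must be invoked to recover both the badness and the completeness of the transported subterm.
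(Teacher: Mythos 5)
Your proof is correct and follows essentially the same route as the paper's: Lemma~\ref{lem:xorreducedbadterm} to pull $w'$ back to a complete bad subterm of $s\theta$, Lemma~\ref{lem:badtermsubstitution} to locate it (up to $\ceq$) in some $\theta(x)$, the shared-variable condition to place it inside $r_i\theta$, $\C$-domination to recover a \emph{complete} bad subterm of $r_i\theta$ up to $\ceq$, and Lemma~\ref{lem:presceq} to transfer it to $u_i$. The only difference is that you expand into an explicit case analysis (on where the occurrence of $x$ sits in $r_i$) the completeness-promotion step that the paper dispatches in a single sentence, which is a sound elaboration rather than a different approach.
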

    \begin{proof}
      Suppose that $w'$ is a complete bad subterm of $w$.
      Because $w \sim s\theta$ and $w$ is $\xor$-reduced,
      by Lemma~\ref{lem:xorreducedbadterm}, there exists a
      complete bad subterm $t$ of $s\theta$ with $w'\sim
      t$.  By Lemma~\ref{lem:badtermsubstitution}, there
      exists a variable $x\in \var(s)$ and a complete bad
      subterm $t'$ of $\theta(x)$ with $t' \ceq t$. Because
      $x$, as a variable of $s$, has to occur also in $r_i$
      for some $\rang i1n$, the term $t'$ is a (not
      necessarily complete) subterm of $r_i\theta$. Since
      $r_i$ is $\C$-dominated, there exists a complete
      subterm $r'$ of $r_i\theta$ with $r'\ceq t'$.  Now,
      recall that $t'\ceq t$ and $t \sim w'$. It follows
      that $r'\ceq w'$.  Furthermore, since $w'$ is bad, so
      is $r'$.  Now, by Lemma~\ref{lem:presceq}, there
      exists a complete bad subterm $u'$ of $u_i$ such that
      $u'\ceq r' \ceq w'$.
    \end{proof}

    The following lemma connects bad terms that occur in a derivation
    with the types of their subterms.

    \begin{lemma} \label{lem:firstsimple} For every $n\ge
      1$, if $\pi(i) \sim \I(c\xor t_1\xxor t_n)$, for
      $c\in\CC$ and pairwise $\xor$-distinct standard terms
      $t_1,\dots,t_n\notin\Csim$, then, for each $k \in
      \{1,\dots,n\}$, there exists $j\le i$ such that
      $\pi(j)\sim \I(\type t_k \xor t_k)$.
    \end{lemma}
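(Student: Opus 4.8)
The plan is to prove, by strong induction on $i$, a statement slightly stronger than the lemma: for every $i$ and every \emph{complete bad subterm} $u$ of (the term of) $\pi(i)$, writing $u\sim c\xor t_1\xor\dots\xor t_n$ with $c\in\CC$ and pairwise $\xor$-distinct standard $t_1,\dots,t_n\notin\Csim$ (so necessarily $n\ge2$), for each $k$ there is a $j\le i$ with $\pi(j)\sim\I(\type{t_k}\xor t_k)$. The lemma itself then follows: for $n=1$ the atom $\pi(i)\sim\I(c\xor t_1)$ already witnesses that $\type{t_1}$ is defined, so the \emph{first} witnessing step $j\le i$ satisfies $\pi(j)\sim\I(\type{t_1}\xor t_1)$; for $n\ge2$ the whole term of $\pi(i)$ is itself a complete bad subterm, and the strengthened claim applies. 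Throughout I would use that all terms in $\pi$ are $\xor$-reduced, and that $t\ceq t'$ leaves the standard non-$\Csim$ components unchanged, altering only the $\CC$-coefficient.

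For the induction, fix $i$ and $u$, and let $R$ and $\sigma$ be the clause and substitution producing $\pi(i)$, so that the ($\xor$-reduced) term $w$ of $\pi(i)$ satisfies $w\sim s\sigma$, where $s$ is the right-hand side term of $R$. The first case is that $R$ is the $\xor$-rule, so $w\sim p\xor q$ with $\I(p)\sim\pi(k_1)$ and $\I(q)\sim\pi(k_2)$ for $k_1,k_2<i$. If $u$ is a proper (buried) subterm of $w$, then it lies inside a standard top-level component of $w$, which is one of the top-level components of $p$ or of $q$; hence $u$ is a complete bad subterm of $p$ or of $q$ (using Lemma~\ref{lem:xorreducedbadterm} to move the bad subterm across $\sim$), and the induction hypothesis at $k_1$ or $k_2$ gives the claim for $t_k$. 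If instead $u=w$, then, since $w$ is the $\xor$-reduced form of $p\xor q$, each $t_k$ is a top-level standard component of exactly one of $p,q$, say of $p$. If $t_k$ is the only non-$\Csim$ standard component of $p$, then $\pi(k_1)\sim\I(c_p\xor t_k)$ for some $c_p\in\CC$, so $\type{t_k}$ is defined and witnessed at some $j\le k_1<i$; otherwise $p$ is itself bad, i.e.\ a complete bad subterm of $\pi(k_1)$, and the induction hypothesis at $k_1$ applies.

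The second case is that $R$ is $\C$-dominated (in particular, $R$ may be a fact with no premises). Here Lemma~\ref{lem:badcopying} does the essential work: applied to $R$, $\sigma$, $w$, and the premise instances $u_l\sim r_l\sigma$ (with $\I(u_l)\sim\pi(k_l)$, $k_l<i$), it yields, for the complete bad subterm $u$ of $w$, a premise index $l$ and a complete bad subterm $u'$ of $u_l$ with $u'\ceq u$. Since $\ceq$ preserves the standard non-$\Csim$ components, $u'\sim c'\xor t_1\xor\dots\xor t_n$ for some $c'\in\CC$, so $u'$ is a complete bad subterm of $\pi(k_l)$ carrying exactly the components $t_1,\dots,t_n$; the induction hypothesis at $k_l<i$ then gives the claim for $t_k$. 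If $R$ is a fact, Lemma~\ref{lem:badcopying} has no premises to appeal to, forcing $w$ to have no complete bad subterm at all—consistent with Lemma~\ref{lem:cdominatedbadterm}—so this subcase is vacuous.

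The heart of the argument, and the step I expect to be most delicate, is the treatment of \emph{buried} bad subterms: a bad term carrying the components $t_1,\dots,t_n$ can be produced deep inside a larger message by a $\C$-dominated clause, so an induction phrased only over whole atoms of $\pi$ would not close. Strengthening the hypothesis to arbitrary complete bad subterms, and invoking Lemma~\ref{lem:badcopying} to relocate such a buried term (up to $\ceq$) into an earlier atom while keeping its standard components intact, is exactly what makes the induction go through. The only place where the number of components actually drops is the $\xor$-rule, where the components split between the two premises; this is what drives the induction down to a single surviving component, at which point the definition of $\type{\cdot}$ closes the argument.
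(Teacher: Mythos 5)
Your proof is correct and follows essentially the same route as the paper's: the same strengthening of the induction hypothesis to complete bad subterms, the same reliance on Lemma~\ref{lem:badcopying} to push bad subterms back to premises of $\C$-dominated clauses, and the same component-splitting analysis (anchored in the definition of $\type{\cdot}$) for the $\xor$-rule. The only difference is organizational—you case-split on which rule produced $\pi(i)$, while the paper first asks whether a $\ceq$-variant of the bad term already occurs as a complete subterm in $\pi_{<i}$ and only then invokes Lemma~\ref{lem:badcopying} in contrapositive form—but the underlying argument is identical.
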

    \begin{proof}
        If $n=1$, then $\I(\type t_1 \xor t_1)$
        belongs to $\pi_{\leq i}$, by the definition of types.

        Now, suppose that $n>1$. In that case we will show,
        by induction on $i$, something more than what is
        claimed in the lemma: If $t$ with $t\sim c\xor
        t_1\xxor t_n$, $c\in \CC$, and pairwise
        $\xor$-distinct standard terms $t_i\notin \Csim$,
        occurs as a complete bad subterm in $\pi(i)$, then,
        for each $k \in \{1,\dots,n\}$, there exists $j\le
        i$ such that $\pi(j)\sim \I(\type t_k \xor t_k)$.

        Suppose that $t$, as above, occurs as a complete
        bad subterm in $\pi(i)$.

        If there exists $t'$ such that $t' \ceq t$ and $t'$
        occurs in $\pi_{<i}$ as a complete subterm, then we
        are trivially done by the induction hypothesis.
        (Note that $t'$ is bad since $t$ is.)  So, suppose
        that such a $t'$ does not occur in $\pi_{<i}$ as a
        complete subterm.  By Lemma \ref{lem:badcopying},
        $\pi(i)$ cannot be obtained by a $\C$-dominated
        Horn clause.  Thus, $\pi(i)$ is obtained by the
        $\xor$-rule, which means that $\pi(i)=\I(u)$ with
        $u \sim s \xor r$ for some $\I(s)$ and $\I(r)$
        occurring in $\pi_{<i}$.  We may assume that $s\sim
        d\xor s_1\xxor s_p$, with $d\in \CC$, and pairwise
        $\xor$-distinct $\xor$-reduced standard terms
        $s_1,\ldots,s_p\notin \Csim$, and $r\sim e\xor
        r_1\xxor r_q$, with $e\in \CC$, and pairwise
        $\xor$-distinct $\xor$-reduced standard terms
        $r_1,\ldots,r_q\notin \Csim$.

        According to our assumption, neither $s$ nor $r$
        contains a complete subterm $t'$ with $t' \ceq t$.
        In particular, neither $s$ nor $r$ contains $t'$
        with $t'\sim t$. So, since $\pi(i) \sim \I(s\xor
        r)$ contains $t$ as a complete subterm, it
        must be the case that $t \sim s\xor r$. Now, with
        $t\sim c\xor t_1\xor\dots\xor t_n$, as above, and
        $k\in \{1,\ldots,n\}$ it follows that either $s_l
        \sim t_k$ or $r_l \sim t_k$, for some $l$.  Suppose
        that the former case holds (the argument is similar
        for the latter case).  If $p>1$ (and thus $s$ is a
        bad term), then, by the induction hypothesis, we
        know that there exists $j<i$ such that $\pi(j)\sim
        \I(\type s_l\xor s_l)$. Since $t_k\sim s_l$, we
        have that $\type t_k\sim \type s_l$, and hence,
        $\pi(j)\sim \I(\type t_k \xor t_k)$.  Otherwise, $s
        \sim d\xor t_k$, and hence, by the definition of
        types, there exists $j<i$ with $\pi(j)\sim \I(\type
        t_k \xor t_k)$.
    \end{proof}

    The following lemma is the key in proving that
    $\Delta(\pi)$ is an incomplete derivation modulo XOR.

    \begin{lemma}\label{lem:cutcut} For every $i\le |\pi|$,
      if $\I(c\xor t_1\xxor t_n)$, for some $c\in\CC$ and
      pairwise $\xor$-distinct standard terms $t_1,\dots,t_n\notin
      \Csim$, belongs to $\pi_{<i}$, then there is a
      derivation for $\I(c\xor \type t_1\xxor \type t_n)$
      from $\Delta(\pi_{<i})$ modulo XOR.
    \end{lemma}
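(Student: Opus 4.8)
The plan is to fix a position $m<i$ with $\pi(m)\sim \I(c\xor t_1\xxor t_n)$ and to build the required derivation directly out of facts already present in $\Delta(\pi_{<i})$, splitting into the cases $n>1$ and $n=1$. Throughout I would use two observations. First, since $c\in\CC$ has all its $\xor$-reduced summands in $\Csim$ while each $t_k$ is standard and not in $\Csim$, the $\xor$-reduced form of $c\xor t_1\xxor t_n$ has exactly $n$ summands outside $\Csim$; hence it is good when $n=1$ and bad when $n>1$. Second, all the types $\type t_k$ are defined, because each $t_k$ occurs in $\pi$, the witnessing positions being provided by Lemma~\ref{lem:firstsimple}; consequently $\Delta(\pi(m))$ is defined.

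For the case $n>1$, the term $c\xor t_1\xxor t_n$ is bad, so by the bad-term clause of the definition of $\Delta$ (writing it as $\varphi_\xor[t_1,\dots,t_n]$ with the $\Csim$-summands of $c$ absorbed into $\varphi_\xor$) we have $\Delta(\I(c\xor t_1\xxor t_n))\sim \I(c\xor \type t_1\xxor \type t_n)$. By Lemma~\ref{lem:equivalentdelta}, $\Delta(\pi(m))\sim \I(c\xor \type t_1\xxor \type t_n)$. Since $m<i$, this atom is (up to $\sim$) a fact of $\Delta(\pi_{<i})$, so the one-element sequence consisting of it is already a derivation of the target modulo XOR.

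The case $n=1$ is the interesting one, since $c\xor t_1$ is good and so $\Delta$ does not replace $t_1$ by its type. Here I would argue in two steps. Since $c\xor t_1$ is good and $c\in\CC$, the observation underlying Lemma~\ref{lem:linrule}, namely $\Delta(c\xor t)\sim c\xor\Delta(t)$ for $c\in\CC$, gives $\Delta(\pi(m))\sim \I(c\xor \Delta(t_1))$. Next, Lemma~\ref{lem:firstsimple} supplies some $j\le m$ with $\pi(j)\sim \I(\type t_1\xor t_1)$; as $\type t_1\in\CC$ this term is good as well, so the same observation yields $\Delta(\pi(j))\sim \I(\type t_1\xor \Delta(t_1))$. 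Both atoms lie in $\Delta(\pi_{<i})$ because $j\le m<i$. Applying the $\xor$-rule to them produces $\I\bigl((c\xor\Delta(t_1))\xor(\type t_1\xor\Delta(t_1))\bigr)\sim \I(c\xor\type t_1)$, where the two copies of $\Delta(t_1)$ cancel modulo XOR; this is exactly the target.

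The remaining points are bookkeeping: verifying that $\Delta(\pi(m))$ and $\Delta(\pi(j))$ are defined (which reduces to definedness of the types $\type t_k$, guaranteed by Lemma~\ref{lem:firstsimple}) and justifying $\Delta(c\xor t)\sim c\xor\Delta(t)$ for $c\in\CC$. I expect the cancellation in the case $n=1$ to be where the design of $\Delta$ and of the type function pays off: the auxiliary fact $\I(\type t_1\xor t_1)$ is precisely what lets us trade $t_1$ for $\type t_1$ without leaving $\Delta(\pi_{<i})$, and this is the only place where a genuine application of the $\xor$-rule, rather than a direct reuse of a $\Delta$-image, is needed.
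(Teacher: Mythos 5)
Your proof is correct and takes essentially the same route as the paper's: the same case split into $n>1$ (where the target atom is, modulo $\sim$, just $\Delta$ applied to the given atom, hence already present in $\Delta(\pi_{<i})$) and $n=1$ (where $\I(\type t_1 \xor t_1)$ is pulled from $\pi_{<i}$, both atoms are pushed through $\Delta$, and one application of the $\xor$-rule cancels the two copies of $\Delta(t_1)$). The only differences are cosmetic: the paper also folds in the trivial case $n=0$, which your bad-case argument covers verbatim, and it cites the definition of types directly where you route the same fact through Lemma~\ref{lem:firstsimple}.
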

    \begin{proof}
      If $n=0$ or $n>1$, then $\I(c\xor \type t_1 \xxor
      \type t_n)\sim \I(\Delta(c\xor t_1\xxor t_n))$ by the
      definition of $\Delta$, and hence, $\I(c\xor \type
      t_1 \xxor \type t_n)$ can be derived from
      $\Delta(\pi_{<i})$.  So suppose that $n=1$. Since we
      have $\I(c \xor t_1)$ in $\pi_{<i}$, then, by the
      definition of types, we also have $\I(\type t_1\xor
      t_1)$ in $\pi_{<i}$. Thus, by the definition of
      $\Delta$, $\I(c \xor \Delta(t_1))$ and $\I(\type
      t_1\xor \Delta(t_1))$ are in $\Delta(\pi_{<i})$.
      From these one obtains $\I(c \xor \type t_1)$ by
      applying the $\xor$-rule.
    \end{proof}

    \medskip Now, we can finish the proof of
    Proposition~\ref{prop:good}. First, note that every
    non-standard message in $\Delta(\pi)$ is
    $\C$-dominated. This immediately follows from the
    definition of $\Delta$. We will now show (*): For each
    $i\in \{1,\ldots, |\pi|\}$, $\Delta(\pi(i))$ can be
    derived from $\Delta(\pi_{<i})$ modulo XOR by using
    only $\C$-dominated terms. This then completes the
    proof of Proposition~\ref{prop:good}. 

    Recall that we assume that $\pi$ is $\xor$-reduced and
    that in this derivation we use only $\xor$-reduced
    substitutions.  To prove (*), we consider two cases:
    \label{proof:good}

    \medskip

    \sloppypar
    \noindent \emph{Case 1.}  $\pi(i)$ is obtained from $\pi_{<i}$
    using a $\C$-dominated Horn clause $R = (p_1(s_1), \dots, p_n(s_n)
    \ra p_0(s_0))$ of $T$: Then there exists a $\xor$-reduced
    substitution $\theta$ such that $\pi(i) \sim p_0(s_0\theta)$ and
    the atoms $p_1(s_1\theta), \dots, p_n(s_n\theta)$ occur in
    $\pi_{<i}$ modulo XOR.  Thus, by Lemma~\ref{lem:equivalentdelta},
    $p_1(\Delta(s_1\theta)), \dots, p_n(\Delta(s_n\theta))$ occur in
    $\Delta(\pi_{<i})$ modulo XOR. Now, by Lemma~\ref{lem:linrule}, we
    have that $\Delta(s_i\theta)\sim s_i(\Delta\theta)$, for every
    $i\in \{0,\ldots,n\}$. Thus, by applying $R$ with the substitution
    $\Delta(\theta)$, we obtain $\Delta(\pi(i))\sim
    \Delta(s_0\theta)\sim s_0(\Delta(\theta))$.

    \medskip

    \noindent \emph{Case 2.}  $\pi(i)$ is obtained by the
    $\xor$-rule: Hence, there are two atoms $\I(s)$ and
    $\I(r)$ in $\pi_{<i}$ such that $\pi(i)\sim \I(s\xor
    r)$. We may assume that $s\sim c\xor s_1\xxor s_m$,
    with $c\in \CC$, and pairwise $\xor$-distinct
    $\xor$-reduced standard terms $s_1,\ldots,s_m\notin
    \Csim$, and $r\sim d\xor r_1\xxor r_l$, with $d\in
    \CC$, and pairwise $\xor$-distinct $\xor$-reduced
    standard terms $r_1,\ldots,r_l\notin \Csim$. Let
    $\{t_1,\dots,t_n\} = (S\setminus R) \cup (R\setminus
    S)$, for $S = \{s_1,\dots,s_m\}$ and $R =
    \{r_1,\dots,r_l\}$. Then, $\pi(i)\sim \I(s\xor r)\sim
    I(c\xor d\xor t_1\xxor t_n)$.  By Lemma
    \ref{lem:cutcut}, we know that $\I(c \xor \type s_1
    \xxor\type s_m)$ and $\I(d \xor \type r_1 \xxor\type
    r_l)$ can be derived from $\Delta(\pi_{<i})$ modulo
    XOR. Hence, $\I(t')$ with $t' = c\xor d \xor \type t_1
    \xxor \type t_n$ can be derived from $\Delta(\pi_{<i})$
    as well (by applying the $\xor$-rule).  Now, let us
    consider two cases:

        \begin{aenum}
        \item $n=0$ or $n>1$: \label{good:subcase} In this
          case, we have that $\Delta(\pi(i)) \sim \I(t')$,
          and hence, $\Delta(\pi(i))$ can be derived from
          $\Delta(\pi_{<i})$.

        \item $n=1$: Because $\I(c\xor s_1\xxor s_m)$ and
          $\I(d\xor r_1\xxor r_l)$ occur in $\pi_{<i}$
          modulo XOR, by Lemma~\ref{lem:firstsimple},
          $\I(\type t_1\xor t_1)$ occurs in $\pi_{<i}$
          modulo XOR as well. Thus, by
          Lemma~\ref{lem:equivalentdelta}, $\I(\type t_1
          \xor \Delta(t_1))$ occurs in $\Delta(\pi_{<i})$
          modulo XOR.  Now, because $\I(t')$, with $t'=
          c\xor d\xor \type t_1$, can be derived from
          $\Delta(\pi_{<i})$ modulo XOR, so can $\I(c\xor
          d\xor\Delta(t_1)) \sim \Delta(\pi(i))$.  \qed
        \end{aenum}

\section{The Reduction}\label{sec:reduction}

In this section, we show how the deduction problem modulo
XOR can be reduced to the deduction problem without XOR for
$\C$-dominated theories. More precisely, for a
$\C$-dominated theory $T$, we show how to effectively
construct a Horn theory $T^+$ such that a ($\C$-dominated)
fact can be derived from $T$ modulo XOR iff it can be
derived from $T^+$ in a syntactic derivation, where XOR is
considered to be a function symbol without any algebraic
properties. As mentioned, the syntactic deduction problem,
and hence, the problem of checking secrecy for
cryptographic protocols w.r.t.~an unbounded number of
sessions, can then be solved by tools, such as ProVerif,
which cannot deal with the algebraic properties of XOR.

In the remainder of this section, let $T$ be a
$\C$-dominated theory.  In what follows, we will first
define the reduction function, which turns $T$ into $T^+$,
and state the main result
(Section~\ref{sec:reductionfunction}), namely that the
reduction is sound and complete as stated above. Before proving this result
in Section~\ref{sec:reductionproof}, we illustrate the
reduction function by our running example
(Section~\ref{sec:reductionexample}).

\subsection{The Reduction
  Function}\label{sec:reductionfunction}

The reduction function uses an operator $\rep \cdot$, which
turns terms into what we call normal form, and a set
$\FSub(t)$ of substitutions associated with the term $t$.
We first define this operator and the set $\FSub(t)$.
The operator $\rep \cdot$ is defined w.r.t.~a linear
ordering $\precc$ on $\C$, which we fix once and for all.

\begin{definition}
For a $\C$-dominated term $t$, we define the \emph{normal form} of
$t$, denoted by $\rep{t}$, recursively as follows:
    \begin{compactitem}
    \item
        If $t$ is a variable, then $\rep t = t$.

    \item
        If $t=f(t_1,\dots,t_n)$ is standard, then
        $\rep{t} = f(\rep{t_1},\dots,\rep{t_n})$.

    \item
        If $t\in\CC$ is non-standard and $t \sim c_1\xor\cdots\xor c_n$, for
        some pairwise $\xor$-distinct $c_1,\dots,c_n\in\C$,
        $n > 1$, such that $c_1 \precc \cdots \precc c_n$,  then
        $\rep{t} = \rep{c_1} \xor (\rep{c_2}\xor(\cdots\xor
        \rep{c_n})\cdots)$.  

    \item
        If $t$ is non-standard and $t \sim c\xor t'$, 
        for some $c\in\CC$, $c\not\sim 0$, and standard $t'$ not in $\Csim$,
        then $\rep{t} = \rep{c} \xor \rep{t'}$.
    \end{compactitem}
    We say that a term $t$ is in \emph{normal form}, if
    $t=\rep t$. A substitution $\theta$ is in normal form,
    if $\theta(x)$ is in normal form for each variable $x$
    in the domain of $\theta$.
\end{definition}
It is easy to see that $\rep t = \rep s$ for $\C$-dominated
terms $t$ and $s$ iff $t\sim s$, and that $\rep t$ is
$\xor$-reduced for any $t$.  By $\CCnorm$, we denote the
set $\{\rep c \mid c\in\CC \}$.  Clearly, this set is
finite and computable in exponential time in the size of
$\C$.

To define the set $\FSub(t)$ of substitutions, we need the
notion of fragile subterms. For a $\C$-dominated term $t$,
the set of \emph{fragile subterms of $t$}, denoted by
$\frag(t)$, is $ \frag(t) = \{ s \mid $ $s$ is a
non-ground, standard term which occurs as a subterm of $t$ in
the form $t'\xor s$ or $s\xor t'$ for some $t'$\}. For
example, $\frag((a\xor \an{x,b})\xor b)=\{\an{x,b}\}$.

We are now ready to define the (finite and effectively
computable) set $\FSub(t)$ of substitutions for a
$\C$-dominated term $t$.  The main property of this set is
the following: For every $\C$-dominated, ground
substitution $\theta$ in normal form, there exists a
substitution $\sigma\in \FSub(t)$ and a substitution
$\theta'$ such that $\rep{t\theta}=(\rep{
  t\sigma})\theta'$. In other words, the substitutions in
$\FSub(t)$ yield all relevant instances of $t$. All ground,
normalized instances are syntactic instances of those
instances.  This resembles the finite variant property of
XOR \cite{ComonDelaune-RTA-2005} mentioned in the
introduction.  However, our construction of $\FSub(t)$ is
tailored and optimized towards $\C$-dominated terms and
substitutions.  More importantly, we obtain a stronger
property in the sense that the
equality---$\rep{t\theta}=(\rep{ t\sigma})\theta'$--- is
\emph{syntactic} equality, not only equality modulo AC; the
notion of $\C$-domination, which we introduced here, is
crucial in order to obtain this property.  Having syntactic
equality is important for our reduction in order to get rid
of algebraic properties completely.

\begin{definition}\label{def:F}
    Let $t$ be a $\C$-dominated term.  We define a
    family of substitutions $\FSub(t)$ as follows. The domain of
    every substitution in $\FSub(t)$ is the set of all variables which
    occur in some $s\in\frag(t)$.  Now, $\sigma\in\Sigma$, if for
    each $x\in\dom(\sigma)$ one of the following cases holds:
    \begin{ienum}
        \item 
            $\sigma(x) = x$, 
        \item 
            $x\in\frag(t)$ and $\sigma(x)=c\xor x$, for some
            $c\in\CCnorm$, $c\neq 0$,
          \item there exists $s\in\frag(t)$ with
            $x\in\var(s)$ and a $\C$-dominated substitution
            $\theta$ in normal form such that
            $s\theta\in\CC$ and $\sigma(x)= \theta(x)$.
    \end{ienum}
\end{definition}
To illustrate the definition and the property mentioned
above, consider, as an example, $t=c\xor x$ and the
substitution $\theta(x)=d\xor m$, with $d\in \CCnorm$ and a
$\C$-dominated, standard term $m\notin \CCnorm$ in normal
form. In this case, we can choose $\sigma(x)=d\xor x$
according to (ii). With $\theta'(x)=m$, we obtain
$\rep{t\theta}=\rep{c\xor d}\xor m=(\rep{t\sigma})\theta'$.
If $\theta(x)$ were $d\in \CCnorm$, then (iii) would be
applied.


    %
We can show (see the appendix):
\begin{lemma}\label{lem:sigmacomputable}
  For a $\C$-dominated term $t$, the set $\FSub(t)$ can be computed in
  exponential time in the size of $t$.
\end{lemma}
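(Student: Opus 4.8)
The plan is to exhibit an explicit algorithm that constructs $\FSub(t)$ and to bound its running time by the size of its output, which is itself exponential in $|t|$. First I would compute the preliminary data: the set $\frag(t)$ of fragile subterms, obtained by a single traversal of $t$ in polynomial time; the common domain $D=\bigcup_{s\in\frag(t)}\var(s)$ of all substitutions in $\FSub(t)$; and the set $\CCnorm$, which, by the remark following its definition, is finite and computable in exponential time. Since, by Definition~\ref{def:F}, membership in $\FSub(t)$ is decided independently for each variable $x\in D$, it suffices to compute, for every $x\in D$, the finite set $A_x$ of all admissible values $\sigma(x)$ permitted by clauses (i)--(iii); then $\FSub(t)$ is exactly the set of choice functions $\{\sigma\mid\dom(\sigma)=D,\ \sigma(x)\in A_x\text{ for all }x\in D\}$, i.e.\ the Cartesian product $\prod_{x\in D}A_x$. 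As $|D|\le|t|$, an exponential bound on each $|A_x|$ and on the cost of computing it yields an exponential bound on $|\FSub(t)|$ and on the overall running time.

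Computing $A_x$ reduces to handling the three clauses. Clause (i) contributes the single value $x$. Clause (ii) applies only when the variable $x$ is itself a fragile subterm, and then contributes the values $c\xor x$ for $c\in\CCnorm\setminus\{0\}$; these are at most $|\CCnorm|$ in number and are read off directly from $\CCnorm$. For clause (iii) I would iterate over the polynomially many $s\in\frag(t)$ with $x\in\var(s)$ and, for each, collect every value $\theta(x)$ arising from a $\C$-dominated substitution $\theta$ in normal form with $s\theta\in\CC$.

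The hard part will be clause (iii), since, read naively, it asks us to solve a matching problem modulo XOR, which in general is intractable. The key observation is that the $\C$-domination of $s$ collapses this to a deterministic computation. If $s$ is the variable $x$ itself, then $s\theta=\theta(x)\in\CC$ forces $\theta(x)\in\CCnorm$, giving at most $|\CCnorm|$ values. If $s=f(s_1,\dots,s_k)$ is a non-variable standard term, then $s\theta$ is again standard, so $s\theta\in\CC$ holds iff $s\theta\sim c$ for one of the finitely many standard elements $c$ of $\CC$. I would solve each such matching problem $s\theta\sim c$ by recursion on the structure of $s$: the top function symbols must agree, and whenever a subterm has the non-standard form $u\xor v$, $\C$-domination guarantees that one of $u,v$ is a ground element of $\CC$, so this ground part can be XORed onto the (ground) target and stripped off, leaving a matching problem $v\theta\sim c'$ with a ground target $c'$ of strictly smaller size. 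This peeling proceeds deterministically down to the variables, at which point $\theta(x)$ is uniquely determined in normal form (or the branch fails). Hence each pair $(s,c)$ contributes at most one value, computable in polynomial time, and $A_x$ is the union of the contributions of clauses (i)--(iii), of size $2^{O(|\C|)}$. Taking the Cartesian product over $D$ then yields $\FSub(t)$ in exponential time; the only genuinely exponential ingredient is the precomputation of $\CCnorm$, which already accounts for the stated bound.
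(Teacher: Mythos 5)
Your proposal is correct and takes essentially the same route as the paper's proof: both arguments observe that $\FSub(t)$ is determined by per-variable choices (so it suffices to bound, for each variable, the set of admissible values), handle clauses (i) and (ii) directly via the exponential bound on $\CCnorm$, and resolve clause (iii) by noting that $\C$-domination makes matching a fragile subterm $s$ against a ground element of $\CC$ deterministic and polynomial-time---your ``peeling'' recursion is precisely the matching algorithm of the paper's Claim~1, which yields a unique matcher (up to $\sim$) per pair $(s,c)$. Your additional remark that a non-variable standard $s$ can only match the standard elements of $\CC$ is a minor optimization not made in the paper, but it does not change the structure or the stated exponential bound of the argument.
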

We are now ready to define the reduction function which turns $T$ into
$T^+$. The Horn theory $T^+$ is given in Fig.~\ref{fig:Tplus}. With
the results shown above, it is clear that $T^+$ can be constructed in
exponential time from $T$.
\begin{figure*}[!t] 
    \centering
    \begin{align}
        \label{rule:prot}
        \rep{r_1\sigma}, \dots, \rep{r_n\sigma} &\ra \rep{r_0\sigma}
        &&  \parbox[t]{31em}{\small
                for each $\C$-dominated rule $r_1,\dots,r_n\ra r_0$ of
                $T$ and each $\sigma \in \FSub(\an{r_0,\dots,r_n})$.
            } \\[.2ex]
        \label{rule:const}
        \I(c), \I(c') &\ra \I(\rep{c\xor c'})
            &&\text{\small for each $c,c'\in\CCnorm$} \\[.2ex]
        \label{rule:pop}
        \I(c), \I(x) &\ra \I(c\xor x)
            &&\text{\small for each $c\in\CCnorm$} \\[.2ex]
        \label{rule:variant}
        \I(c), \I(c'\xor x) &\ra \I(\rep{c\xor c'}\xor x)
            &&\text{\small for each $c,c'\in\CCnorm$} \\[.2ex]
        \label{rule:gen}
        \I(c \xor x), \I(c' \xor x) &\ra \I(\rep{c\xor c'})
            &&\text{\small for each $c,c'\in\CCnorm$} 
    \end{align}
    \caption{\label{fig:Tplus} Rules of the theory $T^+$.
      We use the convention that $I(0\xor x)$ stands for
      $I(x)$. }
\end{figure*}
The Horn clauses in \eqref{rule:const}--\eqref{rule:gen}
simulate the $\xor$-rule in case the terms we consider are
$\C$-dominated.  The other rules in $T$ are
simulated by the rules in \eqref{rule:prot}, which are constructed in
such a way that they allow us to produce messages in normal form for
input messages in normal form.

We can now state the main theorem of this paper. This
theorem states that a message (a secret) can be derived
from $T$ using derivations modulo XOR if and only if it can
be derived from $T^+$ using only syntactic derivations,
i.e., no algebraic properties of XOR are taken into
account.  As mentioned, this allows to reduce the problem
of verifying secrecy for cryptographic protocols with XOR,
to the XOR-free case. The latter problem can then be
handled by tools, such as ProVerif, which otherwise could
not deal with XOR.

\begin{theorem}\label{the:reduction}
  For a $\C$-dominated Horn theory $T$ and $\C$-dominated
  message $b$ in normal form, we have: $T \svdash b$ if and only if $T^+
  \vdash b$.
\end{theorem}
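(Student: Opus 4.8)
The plan is to prove the two directions separately, treating soundness ($T^+\vdash b \Rightarrow T\svdash b$) as the routine direction and completeness ($T\svdash b \Rightarrow T^+\vdash b$) as the substantial one. The guiding idea is that $T^+$ is both a sound and a complete syntactic ``shadow'' of $T$ on $\C$-dominated, normalized atoms: soundness holds because every clause of $T^+$ is, modulo $\sim$, an instance of a clause of $T$, and completeness holds because Proposition~\ref{prop:good} lets us restrict to $\C$-dominated derivations, on which the normal-form operator $\rep{\cdot}$ and the set $\FSub(\cdot)$ turn each step modulo XOR into a purely syntactic one.

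For soundness I would argue rule by rule. Every Horn clause of $T^+$ is, modulo $\sim$, a ground-instantiable specialization of a clause of $T$: a clause of type~\eqref{rule:prot} is obtained from a $\C$-dominated rule $r_1,\dots,r_n\ra r_0$ by applying $\sigma$ and normalizing, and since $\rep{r_j\sigma}\sim r_j\sigma$ it is an instance modulo XOR of that rule; each of \eqref{rule:const}--\eqref{rule:gen} is, using $\rep{c\xor c'}\sim c\xor c'$ and the convention $\I(0\xor x)=\I(x)$, an instance of the $\xor$-rule. Hence, given a syntactic derivation $\pi^+$ of $b$ from $T^+$, I would replay it step by step: each syntactic application of a $T^+$-clause with a ground substitution corresponds to an application modulo XOR of the associated $T$-clause, producing the same atom up to $\sim$, yielding $T\svdash b$.

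For completeness I would first invoke Proposition~\ref{prop:good} to obtain a $\C$-dominated derivation $\pi$ modulo XOR of $b$ from $T$, and, using that $\rep t=\rep s$ iff $t\sim s$ for $\C$-dominated terms together with $\xor$-reducedness of normal forms, assume without loss of generality that every atom of $\pi$ is in normal form. I would then prove by induction on $i$ that $\pi(i)$ is derivable syntactically from $T^+$ together with the earlier atoms, distinguishing the rule used to obtain $\pi(i)$. In \emph{Case~A}, a $\C$-dominated clause $R=(r_1,\dots,r_n\ra r_0)$ is applied with substitution $\theta$; by Lemma~\ref{lem:badtermsubstitutionconverse} I may take $\theta$ to be $\C$-dominated on $\var(R)$, and it is ground and (WLOG) in normal form. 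The stated main property of $\FSub$ then yields $\sigma\in\FSub(\an{r_0,\dots,r_n})$ and $\theta'$ with $\rep{r_j\theta}=(\rep{r_j\sigma})\theta'$ for all $j$---crucially a \emph{syntactic} equality---so the clause $\rep{r_1\sigma},\dots,\rep{r_n\sigma}\ra\rep{r_0\sigma}$ of~\eqref{rule:prot}, applied with the syntactic substitution $\theta'$, derives $\pi(i)$. In \emph{Case~B}, the $\xor$-rule produces $\pi(i)\sim\I(s\xor r)$ from $\I(s),\I(r)$; since $\pi$ is $\C$-dominated, each of $s,r$ has the form $c\xor m$ with $c\in\CC$ and at most one standard summand $m\notin\Csim$, and $\C$-domination of $s\xor r$ forces the two standard summands to cancel when both are present. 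A short case analysis then matches the step to exactly one of \eqref{rule:const} (both in $\CC$), \eqref{rule:pop} or \eqref{rule:variant} (one standard summand), or \eqref{rule:gen} (two coinciding standard summands).

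The main obstacle will be Case~B of completeness, since this is precisely where the algebraic content of XOR must be fully absorbed by the fixed finite rule set \eqref{rule:const}--\eqref{rule:gen}. The delicate point is that $\C$-domination, guaranteed by Proposition~\ref{prop:good}, is what rules out the emergence of two distinct standard non-$\CC$ summands and thereby confines every $\xor$-step to the handful of patterns these rules cover; combined with the syntactic-equality strengthening in the $\FSub$ property, this is what lets the final derivation dispense with $\sim$ entirely and become purely syntactic.
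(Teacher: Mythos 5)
Your proposal is correct and follows essentially the same route as the paper: soundness by replaying each $T^+$-step as an application modulo XOR of the corresponding $T$-clause (the paper's Lemma~\ref{lem:toleft}), and completeness by combining Proposition~\ref{prop:good} with normalization of the resulting $\C$-dominated derivation, the syntactic $\FSub$-property (Lemma~\ref{lem:propertySigmat}) together with Lemma~\ref{lem:badtermsubstitutionconverse} for clause applications, and the case analysis matching $\xor$-steps to the rules \eqref{rule:const}--\eqref{rule:gen} (the paper's Lemma~\ref{lem:toright}). Your two cases A and B correspond exactly to the paper's Cases 2 and 1, including the observation that $\C$-domination of $s\xor r$ excludes two distinct standard non-$\CC$ summands.
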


Before we prove this theorem, we illustrate the reduction
by our running example.

\subsection{Example}\label{sec:reductionexample}

Consider the Horn theory $T_{P_{\mathit{NSL}_{\xor}}}$ of
our running example. As mentioned in
Section~\ref{sec:dominated}, this Horn theory is
$\C$-dominated for $\C=\{a,b\}$. In
what follows, we illustrate how
$T^+_{P_{\mathit{NSL}_{\xor}}}$ looks like, where the
elements of $\C$ are ordered as $a<_{\C} b$.

First, consider the instances of Horn clauses of
$T_{P_{\mathit{NSL}_{\xor}}} $ given by \eqref{rule:prot}.
Only the Horn clauses in \eqref{ckrt-fragile} have fragile
subterms. All other Horn clauses have only one instance in
$T^+_{P_{\mathit{NSL}_{\xor}}}$: the rule itself. This is
because for such Horn clauses $\FSub(\cdot)$ contains only
one substitution, the identity. The Horn clause in
\eqref{ckrt-fragile} has one fragile subterm, namely $x$.
Hence, the domain of every substitution in the
corresponding $\Sigma$-set is $\{x\}$, and according to
Definition \ref{def:F}, this set contains the following
eight substitutions: item (i) gives 
$\sigma_1 = \{x/x\}$;
item (ii) gives 
$\sigma_2 = \{a\xor x/x\}$, $\sigma_3 =
\{b\xor x/x\}$, and $\sigma_4 = \{(a\xor b)\xor x/x\}$;
item (iii) gives 
$\sigma_5 = \{0/x\}$, $\sigma_6 = \{a/x\}$,  
$\sigma_7 = \{b/x\}$, and $\sigma_8 = \{a\xor
b/x\}$. 
For each of these substitutions we obtain an instance of
\eqref{ckrt-fragile}.  For example, $\sigma_4$ yields
$$
    \I(\penc{\an{(a\xor b)\xor x, a}}{\pub{sk_b}}) 
        \ra \I(\penc{\an{m(b,a), a \xor x}}{\pub{sk_a}}) .
$$

Now, consider the Horn clauses induced by
\eqref{rule:const}--\eqref{rule:gen}.  For example, the set
of Horn clauses \eqref{rule:variant} contains among others:
$\I(a\xor b), \I(b\xor x) \ra \I(a \xor x) $ and 
$\I(b), \I(a\xor x) \ra \I((a\xor b) \xor x)$.

\subsection{Proof of Theorem~\ref{the:reduction}\label{sec:reductionproof}}

In what follows, let $T$ be a $\C$-dominated Horn theory
and $b$ be a $\C$-dominated message in normal form. Note
that $\rep b=b$.  The following lemma proves that our reduction is
sound, i.e., that $T^+ \vdash b$ implies $T \svdash b$.

\begin{lemma}\label{lem:toleft}
    If $\pi$ is a syntactic derivation for $b$ from $T^+$, then $\pi$ is a
    derivation for $b$ from $T$ modulo XOR. 
\end{lemma}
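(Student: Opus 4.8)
The plan is to walk along the derivation $\pi$ and show that every single step which is justified \emph{syntactically} by a clause of $T^+$ is also a legal step \emph{modulo XOR} justified by a clause of $T$. Since a derivation modulo XOR only asks that the premises and conclusion of each step agree with earlier atoms up to $\sim$, and since the final atom $\pi(|\pi|)=b$ trivially satisfies $\pi(|\pi|)\sim b$, it suffices to verify that each of the five rule schemas \eqref{rule:prot}--\eqref{rule:gen} is \emph{sound} relative to $T$: whenever $\pi(i)$ is obtained from earlier atoms of $\pi$ by one of these schemas together with a ground substitution $\tau$, the same atom $\pi(i)$ can be obtained modulo XOR from the \emph{same} earlier atoms using a clause of $T$. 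No induction is then needed, because each step is reproduced locally with its premises left unchanged.

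For the protocol-type rules \eqref{rule:prot}, the simulating clause is the underlying $\C$-dominated rule $r_1,\dots,r_n\ra r_0$ of $T$ itself. Using that $\rep{s}\sim s$ for every $\C$-dominated term $s$ (as noted after Definition~\ref{def:F}) and that $\sim$ is a congruence, the composite substitution $\sigma\tau$ satisfies $r_j(\sigma\tau)=(r_j\sigma)\tau\sim(\rep{r_j\sigma})\tau=\pi(k_j)$ for each premise and $r_0(\sigma\tau)=(r_0\sigma)\tau\sim(\rep{r_0\sigma})\tau=\pi(i)$ for the conclusion. Hence the step is reproduced by a single application of $r_1,\dots,r_n\ra r_0$ modulo XOR with substitution $\sigma\tau$.

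For the remaining four schemas the simulating clause is always the $\xor$-rule $\I(x),\I(y)\ra\I(x\xor y)$ of $T$. In each case I bind $x$ and $y$ to the two ground premise terms read off from $\pi$, and then check that the conclusion agrees with $\pi(i)$ modulo XOR: for \eqref{rule:const} and \eqref{rule:pop} this is immediate from $\rep{c\xor c'}\sim c\xor c'$; for \eqref{rule:variant} it follows from associativity, $c\xor(c'\xor x\tau)\sim\rep{c\xor c'}\xor x\tau$; and for \eqref{rule:gen} it uses the cancellation law, $(c\xor x\tau)\xor(c'\xor x\tau)\sim c\xor c'\sim\rep{c\xor c'}$, so that the shared variable $x$ drops out and a ground conclusion results.

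The argument is thus a finite case check, and I do not expect a genuine obstacle. The only point that deserves care is rule \eqref{rule:gen}, where the matched premises share the variable $x$: this is the sole place where an equation beyond AC (namely $x\xor x=0$) is actually invoked, and it is exactly what makes a two-premise XOR step collapse to a ground atom in $\CCnorm$. Everything else is bookkeeping with the normal-form equivalence $\rep{\cdot}\sim(\cdot)$ together with the fact that the congruence $\sim$ permits replacing atoms of $\pi$ by $\sim$-equivalent ones within a derivation modulo XOR.
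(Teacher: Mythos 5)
Your proof is correct and takes essentially the same route as the paper's: the clauses of \eqref{rule:prot} are simulated by the underlying $\C$-dominated rule of $T$ under the composite substitution $\sigma\tau$ (using $\rep{s}\sim s$ and that $\sim$ is a congruence), and the clauses \eqref{rule:const}--\eqref{rule:gen} are simulated by the $\xor$-rule of $T$ applied to the same two premise atoms. Your explicit case check of the four XOR schemas, including the cancellation step for \eqref{rule:gen}, merely spells out what the paper's proof states in a single line.
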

\begin{proof}
  Let $\pi$ be a syntactic derivation for $b$ from
  $T^+$. To prove the lemma it suffices to prove that each
  $\pi(i)$ can be obtained by a derivation modulo XOR from
  $T$ and $\pi_{<i}$.  If $\pi(i)$ is obtained from
  $\pi(j)$ and $\pi(k)$ for $j,k<i$, using one of the Horn clauses
  \eqref{rule:const}--\eqref{rule:gen}, then we can apply
  the $\xor$-rule with $\pi(j)$ and $\pi(k)$ to obtain
  $\pi(j)\xor\pi(i) \sim \pi(i)$.

  Now, suppose that $\pi(i)$ is obtained using a Horn
  clause in \eqref{rule:prot} of the form
  $\rep{r_1\sigma},\dots,\rep{r_n\sigma}\
  \ra\rep{r_0\sigma}$ for some Horn clause
  $(r_1,\dots,r_n\ra r_0)\in T$ and some $\sigma \in
  \FSub(\an{r_0,\dots,r_n})$.  Hence, there exists a
  substitution $\theta$ and, for each $\rang k1n$, there
  exists $j<i$ such that $\pi(j) = \rep{r_k\sigma}\theta
  \sim (r_k\sigma)\theta = r_k(\sigma\theta)$. So, we can
  use the rule $r_1,\dots,r_n\ra r_0$ to obtain
  $r_0(\sigma\theta) = (r_0\sigma)\theta \sim
  \rep{r_0\sigma}\theta = \pi(i)$.  Note that $\rep t \sim
  t$ and if $t\sim t'$, then $t\sigma \sim t'\sigma$ for
  all terms $t,t'$ and substitutions $\sigma$.
\end{proof}
To prove the completeness of our reduction, i.e., that $T
\svdash b$ implies $T^+ \vdash b$, we first prove the
property of $\FSub(t)$ mentioned before
Definition~\ref{def:F}. For this, we need the following
definition.

\begin{definition}\label{def:sigma}
    Let $t$ be a $\C$-dominated term and $\theta$ be a
    $\C$-dominated, ground substitution in normal form with
    $\dom(\theta)=\var(t)$.  Let $\sigma=\sigma(t,\theta)$ be
    the substitution defined as follows.  The domain of
    $\sigma$ is the set of all variables that
    occur in some $s\in\frag(t)$. Let $x$ be such a
    variable. We define $\sigma(x)$ according to the following
    conditions, which have decreasing priority: 
    \begin{aenum}
        \item  If there exists $s\in\frag(t)$ with
          $x\in\var(s)$ such that $s\theta \in \CC$, then $\sigma(x)=\theta(x)$.

        \item Otherwise, if $x\in\frag(t)$ and $\theta(x) =
          c \xor s'$, for $c\in\CC$ and some standard term
          $s'$ not in $\CC$, then $\sigma(x) = c\xor x$.
          (Note that $c\not=0$ since $\theta(x)$ is in
          normal form.)

        \item  Otherwise, $\sigma(x)=x$. (Note that in this
          case we know that $\theta(x)$ is some standard
          term not in $\CC$ if $x\in \frag(t)$.)
    \end{aenum}
\end{definition}
Equipped with this definition, we show (see the appendix) the property
of $\FSub(t)$ mentioned before Definition~\ref{def:F}. 
\begin{lemma}\label{lem:propertySigmat}
 Let $t$ be a $\C$-dominated term and $\theta$ be a
    $\C$-dominated, ground substitution in normal form with
    $\dom(\theta)=\var(t)$. Then,
    $\sigma=\sigma(t,\theta)\in \FSub(t)$ and there exists
    a substitution $\theta'$ such that
    $\theta=\sigma\theta'$, i.e.,
    $\theta(x)=\sigma(x)\theta'$ for every $x\in
    \dom(\theta)$,  and
    $\rep{t'\theta}=\rep{t'\sigma}\theta'$ for every
    subterm $t'$ of $t$. 
\end{lemma}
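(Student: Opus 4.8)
The plan is to establish the three assertions in turn, the first two being essentially bookkeeping and the third carrying the real content, which I would prove by structural induction on the subterm $t'$. First I would check $\sigma = \sigma(t,\theta) \in \FSub(t)$ by matching, variable by variable, the three clauses of Definition~\ref{def:sigma} against conditions (i)--(iii) of Definition~\ref{def:F}: clause (c) produces $\sigma(x)=x$, which is (i); clause (b) produces $\sigma(x) = c\xor x$ with $c\in\CC$, and since $\theta(x)$ is in normal form this $c$ is already in $\CCnorm$ with $c\neq 0$, which is (ii); clause (a) produces $\sigma(x)=\theta(x)$ together with a witnessing fragile $s\ni x$ such that $s\theta\in\CC$, and since $\theta$ is itself a $\C$-dominated ground substitution in normal form this is exactly (iii). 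The domains agree by construction, both being the set of variables occurring in some $s\in\frag(t)$.

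Next I would exhibit $\theta'$. Guided by the same case split, I would set $\theta'(x) = s'$ whenever clause (b) applies (writing $\theta(x) = c\xor s'$ as in that clause), and $\theta'(x) = \theta(x)$ in all other cases (clauses (a), (c), and variables outside $\dom(\sigma)$). Since the clauses of Definition~\ref{def:sigma} have decreasing priority, exactly one applies to each $x$, so $\theta'$ is well defined. Verifying $\theta = \sigma\theta'$ is then a one-line check in each case: in (a), $\sigma(x)=\theta(x)$ is ground so $\sigma(x)\theta'=\theta(x)$; in (b), $\sigma(x)\theta' = (c\xor x)\theta' = c\xor s' = \theta(x)$; in (c) and outside $\dom(\sigma)$, $\sigma(x)=x$ so $\sigma(x)\theta' = \theta'(x)=\theta(x)$. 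Crucially, because $\theta(x)$ is in normal form these equalities are \emph{syntactic}, whence $u\theta = u(\sigma\theta') = (u\sigma)\theta'$ syntactically for every term $u$; in particular $\rep{t'\theta} = \rep{t'\sigma\theta'}$ for every subterm $t'$.

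The heart of the argument is the identity $\rep{t'\theta} = \rep{t'\sigma}\theta'$, which by the previous remark is equivalent to $\rep{t'\sigma\theta'} = \rep{t'\sigma}\theta'$, i.e.\ to the statement that normalization commutes with the residual substitution $\theta'$ on $t'\sigma$. I would prove this by structural induction on $t'$. For a variable $t'=x$ the claim follows directly from the three computations above together with the normal-form clauses for $c\xor x$ and for $x$. For a standard $t'=f(t_1,\dots,t_k)$ the top symbol is preserved by both $\sigma$ and normalization, so $\rep{t'\sigma} = f(\rep{t_1\sigma},\dots,\rep{t_k\sigma})$ and the claim reduces to the inductive hypotheses on the $t_i$. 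For a non-standard $t'$ I would invoke Lemma~\ref{lem:cdominatedbadterm}: since $t'$ is $\C$-dominated and $\xor$-reduced it has no bad subterm, hence modulo $\sim$ it is either an element of $\CC$ (then it is ground and both sides collapse to $\rep{t'}$) or of the form $c\xor t_1$ with $c\in\CC$, $c\not\sim 0$, and a single standard $t_1\notin\Csim$. In the latter case $t_1$, when non-ground, is a fragile subterm of $t$, and the case analysis governing its variables (whether $t_1\theta\in\CC$, and otherwise the (a)/(b)/(c) status of $t_1$ when it is itself a variable) shows that after applying $\sigma$ the entire constant part has been merged into one $\CC$-prefix, whose normal form is a fixed nested XOR of constants, while the remaining standard part is filled in by $\theta'$ with a term outside $\CC$; hence no further cancellation or constant-merging is possible and the two sides coincide.

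The step I expect to be the main obstacle is precisely this non-standard case. The delicate point is to verify that applying $\theta'$ to $\rep{t'\sigma}$ never triggers a new XOR-reduction: the residual images $\theta'(x)$ must be standard terms outside $\CC$ so that they cannot combine with the prefix $\rep{c}$, and the prefix must already be the fully normalized XOR of constants $\rep{c\xor d}$ (the subcase where this prefix cancels to $0$, forcing $\rep{t'\sigma}=x$, is the tightest). This is exactly where $\C$-domination, via the ``no bad subterm'' characterization of Lemma~\ref{lem:cdominatedbadterm} which forbids two non-$\CC$ standard summands, is indispensable: it guarantees the single-standard-summand shape that makes the equality hold \emph{syntactically} rather than merely modulo AC.
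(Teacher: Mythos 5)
Your proposal is correct and follows essentially the same route as the paper's own proof: the same clause-by-clause check that $\sigma(t,\theta)\in\FSub(t)$, the same construction of $\theta'$ (residual $s'$ in the clause-(b) case, $\theta(x)$ otherwise), and the same structural induction on $t'$ with the cases variable / standard / non-standard, the last handled via the single-standard-summand shape forced by $\C$-domination and the split on whether the standard part lands in $\CC$. Your repackaging of the key identity as commutation of normalization with $\theta'$ on $t'\sigma$ (using that $\theta=\sigma\theta'$ holds syntactically) is equivalent to the paper's direct formulation, so the two arguments coincide in substance.
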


We can now show the completeness of our reduction.

\begin{lemma} \label{lem:toright}
    If $\pi$ is a $\C$-dominated derivation for $b$ from
    $T$ modulo XOR, then $\rep\pi$ is a syntactic
    derivation for $b$ from $T^+$. 
\end{lemma}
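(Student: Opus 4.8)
The plan is to prove that if $\pi$ is a $\C$-dominated derivation modulo XOR for $b$ from $T$, then $\rep\pi$ (the sequence obtained by replacing each atom by its normal form) is a \emph{syntactic} derivation for $b$ from $T^+$. I would proceed by induction on the length of $\pi$, showing that each step $\rep{\pi(i)}$ can be derived syntactically from $T^+$ and $\rep{\pi_{<i}}$. Since $b$ is in normal form, $\rep{\pi(|\pi|)} = \rep b = b$, so establishing this step-by-step simulation suffices. The proof splits according to how $\pi(i)$ was obtained in the original derivation: either by the $\xor$-rule or by some other ($\C$-dominated) Horn clause of $T$.

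For the case where $\pi(i)$ is obtained by a $\C$-dominated Horn clause $r_1,\dots,r_n \ra r_0$ of $T$ with some $\xor$-reduced ground substitution $\theta$, the key tool is Lemma~\ref{lem:propertySigmat}. Because $\pi$ is $\C$-dominated, $\theta$ may be taken to be $\C$-dominated and (after normalization) in normal form, so the lemma yields a substitution $\sigma=\sigma(\an{r_0,\dots,r_n},\theta)\in\FSub(\an{r_0,\dots,r_n})$ and a $\theta'$ with $\rep{r_k\theta}=\rep{r_k\sigma}\theta'$ for each $k$. The rule $\rep{r_1\sigma},\dots,\rep{r_n\sigma}\ra\rep{r_0\sigma}$ is exactly one of the instances in \eqref{rule:prot} of $T^+$, so applying it with $\theta'$ produces $\rep{r_0\sigma}\theta'=\rep{r_0\theta}\sim\rep{\pi(i)}$ from the hypotheses $\rep{r_k\sigma}\theta'=\rep{r_k\theta}$, which by induction already appear in $\rep{\pi_{<i}}$. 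The crucial point—guaranteed by $\C$-domination—is that this equality is \emph{syntactic}, not merely modulo AC, so the resolution step is genuinely XOR-free.

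The harder case is the $\xor$-rule: here $\pi(i)\sim\I(s\xor r)$ for atoms $\I(s),\I(r)$ in $\pi_{<i}$, and I must show $\rep{\pi(i)}$ is derivable using only the rules \eqref{rule:const}--\eqref{rule:gen}. Writing $s$ and $r$ in the canonical form $c\xor s_1\xxor s_m$ and $d\xor r_1\xxor r_l$ with the $s_i,r_j$ standard and $\xor$-distinct, I would examine which of the standard summands cancel and which survive. Since the whole sum $\pi(i)$ is $\C$-dominated, at most one standard summand survives, so after cancellation $\rep{\pi(i)}$ has the form $\rep{e}$ (with $e\in\CC$) or $\rep{e}\xor x'$ for a single standard $x'$. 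Matching the various cancellation patterns—two constant-closure parts combining, a constant combining with a surviving standard term, or two equal standard terms annihilating—against the four XOR-simulating rule schemas \eqref{rule:const}, \eqref{rule:pop}, \eqref{rule:variant}, \eqref{rule:gen} is the main obstacle, since I must verify that each possible shape of the cancellation is covered by exactly one of these schemas (with $c,c'$ ranging over $\CCnorm$) and that the resulting head is literally $\rep{\pi(i)}$ in normal form. The bookkeeping is routine once the case split is set up correctly, but one must be careful that the convention $\I(0\xor x)=\I(x)$ is used to absorb the degenerate subcases and that $\rep{c\xor c'}$ is always recomputed into normal form so that the produced atom syntactically matches the normal form demanded by the induction hypothesis.
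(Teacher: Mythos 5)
Your proposal is correct and follows essentially the same route as the paper's own proof: the same step-by-step simulation, the same two-case split, the use of Lemma~\ref{lem:propertySigmat} with $\sigma(\an{r_0,\dots,r_n},\theta)$ and rule \eqref{rule:prot} for Horn-clause steps, and the shape analysis of $\rep s$, $\rep r$, $\rep{s\xor r}$ under $\C$-domination to match the $\xor$-rule against \eqref{rule:const}--\eqref{rule:gen}. One cosmetic remark: where you write $\rep{r_0\sigma}\theta'=\rep{r_0\theta}\sim\rep{\pi(i)}$, the last relation is in fact syntactic equality, since equivalent $\C$-dominated terms have identical normal forms --- which is exactly what makes $\rep\pi$ a syntactic derivation.
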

\begin{proof}
  We show that every $\rep{\pi(i)}$ can be derived
  syntactically from $T^+$ and $\rep{\pi_{<i}}$.  Two cases
  are distinguished:

  \medskip\noindent \textbf{Case 1:} $\pi(i)$ is obtained
  from $\pi(j) = \I(t)$ and $\pi(k)=\I(s)$, for $j,k<i$,
  using the $\xor$-rule.  In that case $\pi(i) \sim
  \I(t\xor s)$. By assumption $t$, $s$, and $t\xor s$ are
  $\C$-dominated, and hence, $\rep t$, $\rep s$,
  $\rep{t\xor s}$ are either normalized standard terms not
  in $\CC$, terms in $\CCnorm$, or terms of the form $c\xor
  u$ for $c\in \CCnorm$ and a normalized standard term
  $u\notin \CC$, respectively. However, it is not the case
  that $\rep t=c\xor u$ or $\rep t=u$ and $\rep s=u'\notin
  \CC$ or $\rep s=c'\xor u'$ with $u\not=u'$ since
  otherwise $\rep{t\xor s}$ would not be $\C$-dominated.
  Now, it is easy to see that $\xor$-rule can be simulated
  by one of the Horn clauses
  \eqref{rule:const}--\eqref{rule:gen}.

  \medskip\noindent \textbf{Case 2:} $\pi(i)$ is obtained
  using some $\C$-dominated rule $(r_1,\dots,r_n \ra
  r_0)\in T$ and a ground substitution $\theta$.  Since
  $\pi$ is $\C$-dominated, by
  Lemma~\ref{lem:badtermsubstitutionconverse} and
  \ref{lem:xorreducedbadterm} we may assume that $\theta$
  is $\C$-dominated. Since $\pi$ is a derivation modulo
  XOR, we may also assume that $\theta$ is in normal form.
  We have that $\pi(i) \sim r_0\theta$ and there exist
  $j_1,\dots, j_n<i$ such that $\pi(j_k) \sim r_k\theta$,
  for all $\rang k1n$.
        
  Let $\sigma = \sigma(\an{r_0,\dots,r_n},\theta)$ and let
  $\theta'$ be as specified in
  Lemma~\ref{lem:propertySigmat}. By
  Lemma~\ref{lem:propertySigmat}, $\sigma\in
  \FSub(\an{r_0,\dots,r_n})$.  Now, to obtain
  $\rep{\pi(i)}$, we can use the rule $\rho =
  (\rep{r_1\sigma}, \dots, \rep{r_n\sigma} \ra
  \rep{r_0\sigma})\in T^+$ with the substitution $\theta'$.
  In fact, by Lemma~\ref{lem:propertySigmat}, we have that
  $\rep{r_k\sigma}\theta' = \rep{r_k\theta}=\rep{\pi(j_k)}$
  for all $\rang k0n$, where $j_0=0$. (Recall that for
  $\C$-dominated terms $s$ and $t$ with $s\sim t$, we have
  that $\rep s=\rep t$.)
\end{proof}
Now, from the above lemma and Proposition~\ref{prop:good}
it immediately follows that $T \svdash b$ implies $T^+
\vdash b$.

    \section{Authentication}\label{sec:authentication}

In the previous section, we showed how to reduce the
derivation problem modulo XOR for $\C$-dominated Horn
theories to the syntactic derivation problem. While the
derivation problem corresponds to the secrecy problem for
cryptographic protocols w.r.t.~an unbounded number of
sessions, in this section, we will see that it is not hard
to extend our result to authentication properties.

\subsection*{Authentication as Correspondence Assertions}

Authentication properties are often expressed as
\emph{correspondence assertions} of the form $\eEnd(x) \ra
\eBegin(x)$ where $x$ describes the parameters on which the
begin and end events should agree. This correspondence
should be read as follows: If event $\eEnd(x)$ has
occurred, then also event $\eBegin(x)$. For example,
$\eEnd(a,b,n) \ra \eBegin(a,b,n)$ could be interpreted as:
If $b$ thinks to have finished a run of a protocol with $a$
in which the nonce $n$ was used (in this case event
$\eEnd(a,b,n)$ occurred), then $a$ has actually run a
protocol with $b$ in which $n$ was used (in this case event
$\eBegin(a,b,n)$ occurred).  To check such correspondence assertions
in the Horn theory based approach, roughly speaking, the protocol
rules are augmented with atoms representing events of the form
$\eBegin(x)$ and $\eEnd(x)$ (see, e.g., \cite{Blanchet-archive-2008}
for details).

\begin{figure*}[t!]
     \begin{align} 
 \label{eq:sidfirst}        &\I(\enc{\pub{k_b}}{n(a,b,\sid), a})
             &\text{for every $a\in\honest$, $b\in\part$} \\[1ex]
 \label{eq:sidsecond}        \I(\enc{\pub{k_b}}{x, a}) \ \ra \ &\I(\enc{\pub{k_a}}{m(b,a,\sid,x), x\xor b})
             &\text{for every $b\in\honest$, $a\in\part$} \\[1ex]
 \label{eq:beginevent}        \eBegin(a,b,y), \
         \I(\enc{\pub{k_a}}{y, n(a,b,\sid)\xor b}) \ \ra\  &\I(\enc{\pub{k_b}}{y})
             &\text{for every $a\in\honest$, $b\in\part$} \\[1ex]
         \I(\enc{\pub{k_b}}{(x,a)}), \
 \label{eq:endevent}        \I(\enc{\pub{k_b}}{m(b,a,\sid,x)})\  \ra \ &\eEnd(a,b,m(b,a,\sid,x))
             &\text{for every $b\in\honest$, $a\in\part$} 
     \end{align}
     \caption{Rules for authentication ($\sid$ is a variable intended to
     range over session identifiers). \label{fig:ckrt-auth}}
\end{figure*}

For our running example, this is illustrated in
Figure~\ref{fig:ckrt-auth}. In \eqref{eq:endevent}, the end
event indicates that $b$ believes to have talked to $a$ and
the nonce $m(b,a,sid,x)$ was used in the interaction, where
$x$ is the nonce $b$ believes to have received from $a$ and
$sid$ is a session identifier.  The parameters $x$ and
$\sid$ are added to the term representing the nonce in
order to make the analysis more precise. In particular, the
session identifier is added in order to make the
correspondence stronger: The events should not only
correspond on the names and the nonces used in the protocol
run, but also on the session identifiers.  Note that
without the session identifier, correspondence of sessions
would otherwise not be guaranteed since in the Horn theory
based approach new protocol runs do not necessarily use
completely fresh nonces. The begin event in
\eqref{eq:beginevent} indicates that $a$ just received the
response from $b$ and now outputs her response to $b$,
where the begin event contains the nonce received from $b$.

We note that, strictly speaking, the Horn theory depicted
in Figure~\ref{fig:ckrt-auth} falls out of the class of
Horn theories that we allow, not because of
$\xor$-linearity but because of the fact that the variable
$sid$ occurs on the right-hand side of a Horn clause but
not on the left-hand side (see \eqref{eq:sidfirst} and
\eqref{eq:sidsecond}).  However, as we noted in Section
\ref{sec:preliminaries}, this assumption can easily be
relaxed for variables that are supposed to be substituted
only by $\C$-dominated terms, which is the case for session
identifiers.

Now, let $T$ be a Horn theory model of a protocol and an
intruder, i.e., $T$ consists of a set of protocol rules
(such as those in Figure~\ref{fig:ckrt-auth}), a set of
initial intruder facts, and a set of intruder rules.
Following Blanchet \cite{Blanchet-archive-2008}, we say
that a (non-injective) correspondence assertion of the form
$\eEnd(x) \ra \eBegin(x)$ is satisfied by $T$ if
 \begin{equation}\label{auth}
\parbox{.82\columnwidth}{ for every finite set of messages
  $B$ and every message $m_0 \notin \widehat{B}$, it holds that $T
  \cup \{\eBegin(m)\mid m\in B\}\mbox{$\not\!\!\svdash$}
  \eEnd(m_0)$,  }
\end{equation}
where $\widehat{B}=\{t\mid$ there exists $t'\in
B$ and $t\sim t'\}$. In \cite{Blanchet-archive-2008}, this
formulation (more precisely, a syntactic version, i.e., the
XOR-free version) is somewhat implicit in a theorem which
reduces correspondence assertions in process calculus to
Horn theories. Blanchet then proposes a method for proving
the syntactic version of \eqref{auth} using ProVerif.


\subsection*{Extending Our Reduction to Correspondence
Assertions}\label{sec:reductioncorrespondences}

The following theorem extends our reduction presented in
Section~\ref{sec:reduction} to the problem of solving
\eqref{auth} \emph{with} XOR. In fact, we show that if in
\eqref{auth} the ($\C$-dominated) Horn theory $T$ is
replaced by $T^+$ (i.e., we can use the same reduction
function as in Section~\ref{sec:reduction}), then
derivation modulo XOR ($\svdash$) can be replaced by
syntactic derivation ($\;\vdash\;$). Now, the latter problem
(the syntactic version of \eqref{auth}) can be solved using
ProVerif. Formally, we can prove:

\begin{theorem}\label{the:authentication}
  Let $T$ be a $\C$-dominated Horn theory.
  Then, \eqref{auth} holds iff for every
  finite set of messages $B$ and every message $m_0\notin
  B$, it holds that $T^+\cup\{\eBegin(m)\mid m\in
  B\}\mbox{$\not\,\vdash$} \ \eEnd(m_0)$.
\end{theorem}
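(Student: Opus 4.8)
The plan is to reduce Theorem~\ref{the:authentication} to the already-established Theorem~\ref{the:reduction} by treating the augmented theory as an ordinary $\C$-dominated Horn theory whose only novelty is the presence of the distinguished predicates $\eBegin$ and $\eEnd$. The key observation is that the reduction function turning $T$ into $T^+$ (Fig.~\ref{fig:Tplus}) acts purely on the $\xor$-structure of terms and is completely indifferent to which predicate symbol heads an atom; in particular the rules in \eqref{rule:prot} are generated from every $\C$-dominated rule of $T$, including those carrying $\eBegin$ on the left or $\eEnd$ on the right, while the rules \eqref{rule:const}--\eqref{rule:gen} only ever involve the intruder predicate $\I$. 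Thus I first record that $T^+$ is well defined for theories containing event atoms, that the event atoms are themselves $\C$-dominated (for our running example, $\eEnd(a,b,m(b,a,\sid,x))$ is $\C$-dominated since the only $\xor$-subterms live inside the intruder messages), and that for any finite set $B$ the facts $\{\eBegin(m)\mid m\in B\}$ are $\C$-dominated provided each $m$ is.

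Next I would fix the equivalence direction by direction. For a finite message set $B$ and a message $m_0$, set $T_B = T\cup\{\eBegin(m)\mid m\in B\}$ and observe that $T_B$ is again $\C$-dominated (enlarging $\C$ if necessary to absorb the constants occurring in $B$, which is harmless because $\C$-domination is monotone in $\C$). The crucial point is that adding the facts $\eBegin(m)$ commutes with the reduction: $(T_B)^+ = T^+\cup\{\rep{\eBegin(m)}\mid m\in B\} = T^+\cup\{\eBegin(\rep m)\mid m\in B\}$, since a fact $\eBegin(m)$ has no fragile subterms once $m$ is ground and normalized, so its only instance under \eqref{rule:prot} is itself. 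Hence running the reduction on $T_B$ yields exactly $T^+$ augmented with the normalized begin-facts, which is precisely the object appearing on the right-hand side of the theorem once we normalize $B$.

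With this bookkeeping in place, I would apply Theorem~\ref{the:reduction} to the atom $\eEnd(m_0)$ and the theory $T_B$. Theorem~\ref{the:reduction} gives, for each choice of $B$ and normalized $m_0$, the equivalence $T_B\svdash \eEnd(m_0)$ iff $(T_B)^+\vdash \eEnd(m_0)$, i.e.\ $T^+\cup\{\eBegin(\rep m)\mid m\in B\}\vdash \eEnd(\rep{m_0})$. Taking contrapositives and quantifying over all finite $B$ and all $m_0$ then converts the $\svdash$-formulation of \eqref{auth} into its $\vdash$-counterpart. The one genuine subtlety to handle carefully is the quantifier over $m_0$: the left condition excludes $m_0\notin\widehat B$ (closure modulo $\sim$), whereas the right condition excludes $m_0\notin B$; I would argue that because we range over \emph{all} finite sets $B$ and the normal form $\rep\cdot$ is a canonical representative modulo $\sim$, replacing $B$ by $\{\rep m\mid m\in B\}$ and $m_0$ by $\rep{m_0}$ makes the two side-conditions coincide, since $m_0\in\widehat B$ iff $\rep{m_0}\in\{\rep m\mid m\in B\}$.

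The main obstacle I anticipate is exactly this matching of the exclusion conditions $m_0\notin\widehat B$ versus $m_0\notin B$ together with the normalization of $B$ and $m_0$: one must verify that passing to normal forms does not spuriously create or destroy derivability of $\eEnd(m_0)$ and that the universal quantification over $B$ is robust under replacing $B$ by its normalized image. Everything else---that $T^+$ is insensitive to event predicates, that begin-facts reduce to themselves, and that Theorem~\ref{the:reduction} applies verbatim to the predicate $\eEnd$---is routine once the framework of Section~\ref{sec:reduction} is in hand, because the soundness and completeness lemmas (Lemma~\ref{lem:toleft} and Lemma~\ref{lem:toright}) never use any special property of the predicate $\I$ beyond its role in the $\xor$-rule, which the event atoms do not touch.
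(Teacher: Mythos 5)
There is a genuine gap, and it sits exactly where the paper has to do real work beyond Theorem~\ref{the:reduction}. Condition \eqref{auth} quantifies over \emph{all} finite sets of messages $B$, including sets whose elements are not $\C$-dominated with respect to the fixed $\C$ from which $T^+$ is built. For such a $B$, the theory $T_B = T\cup\{\eBegin(m)\mid m\in B\}$ is \emph{not} $\C$-dominated, so Theorem~\ref{the:reduction} simply does not apply to it. Your proposed repair---``enlarging $\C$ if necessary to absorb the constants occurring in $B$''---is unsound here, because $T^+$ is not invariant under enlarging $\C$: the clauses \eqref{rule:const}--\eqref{rule:gen} enumerate all $c,c'\in\CCnorm$, and the instances in \eqref{rule:prot} are generated from $\FSub(\cdot)$, which also grows with $\CC$. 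So with an enlarged $\C_B$ you obtain an equivalence between $T_B\svdash\eEnd(m_0)$ and derivability in a \emph{different, strictly larger} reduced theory $T^{+_{\C_B}}\cup\{\eBegin(\rep m)\mid m\in B\}$, whereas the right-hand side of Theorem~\ref{the:authentication} refers to the single fixed $T^+$ for every choice of $B$. Nothing in your argument shows that derivations using the extra rules of $T^{+_{\C_B}}$ can be simulated in $T^+$, and this is precisely the direction (``attack modulo XOR implies syntactic attack against $T^+$'') that carries the content of the theorem.

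The paper closes this gap differently: it keeps $\C$ fixed and first transforms an arbitrary attack into a $\C$-dominated one. This is Proposition~\ref{prop:authgood}, an extension of Proposition~\ref{prop:good} in which the operator $\Delta$ is made \emph{injective} (bad terms $t=c\xor t_1\xor\cdots\xor t_n$ are mapped to $c\xor\type{t_1}\xor\cdots\xor\type{t_n}\xor\enc\cc{\gamma(t)}$ for an injective $\gamma$ and a fresh constant $\cc$ added to the intruder knowledge). Injectivity is exactly what guarantees that $m_0\notin\widehat B$ implies $\Delta(m_0)\notin\widehat{\Delta(B)}$, so an arbitrary counterexample $(B,m_0)$ to \eqref{auth} yields a $\C$-dominated counterexample $(\Delta(B),\Delta(m_0))$ with respect to the \emph{original} $\C$; only then is Theorem~\ref{the:reduction} invoked, followed by soundness (Lemma~\ref{lem:toleft}) for the converse direction. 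Your proposal contains no analogue of this step---the normalization bookkeeping you identify as ``the main obstacle'' (matching $m_0\notin\widehat B$ against $m_0\notin B$) is indeed needed, but it is the smaller of the two issues, and handling it alone does not make the argument go through.
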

The proof of this theorem requires some slight extension of
Proposition~\ref{prop:good}, stated below, in which an injective
version of $\Delta$ is used, i.e., $t\not\sim t'$ should imply that
$\Delta(t)\not\sim\Delta(t')$. This is needed to guarantee that if
$m_0\notin \widehat B$, then $\Delta(m_0)\notin \widehat{\Delta(B)}$.


This can be achieved by fixing an \emph{injective} function $\gamma$ which
takes a term to some term built from $0$ and $\an{\cdot,\cdot}$ (or
any other function which the intruder can apply). We also add the
fresh constant $\cc$ to the intruders knowledge. Now, for a bad term
$t=c\xor t_1\xor \cdots \xor t_n$, we define $\Delta(t)=c\xor
\type{t_1}\xor \cdots \xor \type{t_n}\xor \enc\cc{\gamma(t)}$. The
important property of $\enc\cc{\gamma(t)}$ is that the intruder can
derive this message and that it is unique for every term $t$.

\begin{proposition}\label{prop:authgood}
   Let $T$ be a $\C$-dominated Horn theory, 
   $B$ be a finite set of facts, and $a$ be a fact. If $T \cup B
   \svdash a$, then there exists a $\C$-dominated derivation for
   $\Delta(a)$ from $T \cup \Delta(B)$ modulo XOR.
\end{proposition}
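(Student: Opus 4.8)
The plan is to mimic the proof of Proposition~\ref{prop:good} as closely as possible, but to track the fact that we now derive $\eEnd(m_0)$ from the \emph{augmented} theory $T\cup B$, where $B$ supplies $\eBegin$-facts, and to replace the $\Delta$ operator used there by the \emph{injective} variant $\Delta(t)=c\xor \type{t_1}\xor\cdots\xor \type{t_n}\xor \enc\cc{\gamma(t)}$ described just above the statement. Concretely, I would fix a derivation $\pi$ modulo XOR for $a$ from $T\cup B$, assume (w.l.o.g.) that $\pi$ and all substitutions in it are $\xor$-reduced, and then rerun the entire lemma chain (Lemmas~\ref{lem:xorreducedbadterm}--\ref{lem:cutcut}) for this new $\Delta$. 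The crucial point is that the facts in $B$ can be treated on the same footing as the initial facts of $T$: each $\eBegin(m)\in B$ is a fact with no left-hand side, so when it is used at step $\pi(i)$ there is nothing to substitute and $\Delta(\pi(i))=\Delta(\eBegin(m))$ is simply the corresponding fact $\eBegin(\Delta(m))\in\Delta(B)$, which is available by construction. Thus the case analysis splits exactly as before into (Case~1) a $\C$-dominated rule of $T$, handled verbatim by Lemmas~\ref{lem:linrule} and~\ref{lem:equivalentdelta}, (Case~2) the $\xor$-rule, handled by Lemma~\ref{lem:cutcut} and~\ref{lem:firstsimple}, and a new trivial (Case~3) for a fact drawn from $B$.

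The first thing I would verify is that the new, injective $\Delta$ still satisfies the two structural properties on which every lemma rests: namely $\Delta(c\xor t)\sim c\xor\Delta(t)$ for $c\in\CC$ (this is what makes Lemma~\ref{lem:linrule} go through), and $s\sim t\Rightarrow \Delta(s)\sim\Delta(t)$ (Lemma~\ref{lem:equivalentdelta}). The extra summand $\enc\cc{\gamma(t)}$ is a standard term, so prepending $c$ to a bad term $t$ does not touch it and $\Delta(c\xor t)=c\xor\type{t_1}\xor\cdots\xor\type{t_n}\xor\enc\cc{\gamma(t)}\sim c\xor\Delta(t)$ still holds because $\gamma$ is invariant under $\sim$ (it is applied to the $\sim$-class of $t$); similarly equivalent terms have equivalent $\gamma$-values, so Lemma~\ref{lem:equivalentdelta} survives. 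I would also note that the marker $\enc\cc{\gamma(t)}$ is derivable by the intruder: $\cc$ is added to the initial intruder knowledge and $\gamma(t)$ is built from $0$ and pairing, so the $\xor$-rule together with the pairing/encryption intruder rules lets us splice this summand into $\Delta(\pi(i))$ whenever a bad term is produced. This is exactly what is needed so that $\Delta(\pi)$, after completing the incomplete steps with $\C$-dominated derivations (the ``(*)'' argument finishing Proposition~\ref{prop:good}), remains a genuine derivation from $T\cup\Delta(B)$.

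The step I expect to be the main obstacle is establishing \emph{injectivity} of $\Delta$ on $\sim$-classes, i.e.\ that $t\not\sim t'$ implies $\Delta(t)\not\sim\Delta(t')$, which is the whole point of this variant and the property that Proposition~\ref{prop:good} did not need. The subtlety is that two $\xor$-distinct bad terms could in principle collapse after their standard summands are replaced by \emph{types}, since distinct terms may share a type; the marker $\enc\cc{\gamma(t)}$ is precisely what breaks such a collision, because $\gamma$ is injective and is applied to the \emph{original} term $t$, so the summand $\enc\cc{\gamma(t)}$ uniquely identifies $t$ modulo $\sim$. I would argue that if $\Delta(t)\sim\Delta(t')$ then, comparing $\xor$-reduced forms, the standard summand $\enc\cc{\gamma(t)}$ must match $\enc\cc{\gamma(t')}$ (these markers cannot cancel against type-summands or against each other unless $\gamma(t)\sim\gamma(t')$), whence $\gamma(t)=\gamma(t')$ and therefore $t\sim t'$ by injectivity of $\gamma$. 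Care is required for the good (non-bad) terms, where $\Delta$ recurses into subterms: here injectivity follows by structural induction, using that $\Delta$ preserves the top function symbol of standard terms. Once injectivity is in hand, the consequence $m_0\notin\widehat B \Rightarrow \Delta(m_0)\notin\widehat{\Delta(B)}$ is immediate, and this is precisely the hook that Theorem~\ref{the:authentication} needs when it transfers the non-derivability of $\eEnd(m_0)$ back and forth between $T$ and $T^+$.
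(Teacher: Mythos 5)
Your overall plan coincides with the paper's own (the paper literally proves Proposition~\ref{prop:authgood} by saying it is ``very similar'' to Proposition~\ref{prop:good}, with the injective $\Delta$ fixed beforehand): treat the facts in $B$ as a trivial extra case, rerun the lemma chain for the modified $\Delta$, note that the marker $\enc\cc{\gamma(t)}$ is intruder-derivable so it can be spliced in wherever a bad term is produced, and keep injectivity in reserve for Theorem~\ref{the:authentication}. That structure is right, and your identification of the two properties that everything rests on ($\Delta(c\xor t)\sim c\xor\Delta(t)$ and $s\sim t\Rightarrow\Delta(s)\sim\Delta(t)$) is exactly where the burden lies.

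However, your verification of the first of these properties is wrong, and this is a genuine gap. With the marker defined on the \emph{whole} bad term, i.e.\ $\Delta(t)=c'\xor\type{t_1}\xor\cdots\xor\type{t_n}\xor\enc\cc{\gamma(t)}$ for $t\sim c'\xor t_1\xor\cdots\xor t_n$, the identity $\Delta(c\xor t)\sim c\xor\Delta(t)$ is \emph{false} whenever $c\not\sim 0$: the left-hand side carries the summand $\enc\cc{\gamma(c\xor t)}$, the right-hand side carries $\enc\cc{\gamma(t)}$, and these cannot be identified or cancelled modulo $\sim$ (they are $\xor$-free standard terms), precisely \emph{because} $\gamma$ is injective and $c\xor t\not\sim t$. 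Your justification---``$\gamma$ is invariant under $\sim$''---only covers $\sim$-equivalent arguments, which is not the situation here; the injectivity you need later is in direct conflict with the homomorphism property you claim. The failure is not harmless: Lemma~\ref{lem:linrule} ($\Delta(s\theta)\sim s(\Delta\theta)$) is what makes Case~1 of the proof work, and while for $\I$-atoms a discrepancy consisting of intruder-derivable markers could be patched by extra applications of the $\xor$-rule, no such patching is possible for premises or conclusions built with $\eBegin$/$\eEnd$ whenever a variable occurs in a context $c\xor x$. The repair (which the paper's terse definition arguably intends, though it does not say so) is to let the marker depend only on the bad core: set $\Delta(t)=c'\xor\type{t_1}\xor\cdots\xor\type{t_n}\xor\enc\cc{\gamma(t_1\xor\cdots\xor t_n)}$. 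Then $\Delta(c\xor t)\sim c\xor\Delta(t)$ holds by construction, injectivity survives (equal markers force $\sim$-equal cores, hence equal types, and then the $\CC$-parts must also agree), and the rest of your argument, including the treatment of $B$ and the splicing of markers in the $\xor$-rule case, goes through as you describe.
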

The proof of this proposition is very similar to the one of
Proposition~\ref{prop:good}. Only minor modifications are
necessary.



Now, to prove Theorem \ref{the:authentication}, it suffices
to show that the following conditions are equivalent, for a
$\C$-dominated theory $T$:
\begin{ienum}
\item 
    there exist a finite set of messages $B$ and a message $m_0\notin
    \widehat B$ such that $T \cup \{\eBegin(m)\mid m\in B\} \  \svdash \
    \eEnd(m_0)$ 
\item
    there exist a finite set of $\C$-dominated messages $B$ and a
    $\C$-dominated message $m_0\notin \widehat B$ such that
    $T \cup\{\eBegin(m)\mid m\in B\} \ \svdash \
    \eEnd(m_0)$.
\item
    there exist a finite set of $\C$-dominated messages $B$ and a
    $\C$-dominated message $m_0\notin B$ such that
    $T^+\cup\{\eBegin(m)\mid m\in B\} \ \vdash \
    \eEnd(m_0)$.
\item
    there exist a finite set of messages $B$ and a message $m_0\notin
    B$ such that $T^+\cup\{\eBegin(m)\mid m\in B\} \ \vdash \
    \eEnd(m_0)$.
\end{ienum}
\begin{proof}
    The implication (i)$\Rightarrow$(ii) follows from 
    Proposition~\ref{prop:authgood} and by the fact that $\Delta$ is
    injective; (ii)$\Rightarrow$(iii) is given by
    Theorem~\ref{the:reduction} (we use the fact that $T \cup
    \{\eBegin(m)\mid m\in B\}$ is $\C$-dominated and the fact that
    $(T \cup \{\eBegin(m)\mid m\in B\})^+ = T^+ \cup
    \{\eBegin(m)\mid m\in \rep B\} $  );  (iii)$\Rightarrow$(iv) is
    trivial; finally, (iv)$\Rightarrow$(i)  is given by
    Lemma~\ref{lem:toleft}.
\end{proof}


\section{Implementation and Experiments}\label{sec:implementationandexperiments}

We have implemented our reduction, and together with
ProVerif, tested it on a set of protocols which employ the
XOR operator (see \cite{implementation} for the implementation). In
this section, we report on our implementation and the experimental
results.

\subsection{Implementation}\label{sec:implementation}

We have implemented our reduction function in SWI prolog
(version 5.6.14).  Our implementation essentially takes a
Horn theory as input. More precisely, the input consists of
(1) a declaration of all the functor symbols used in the
protocol and by the intruder, (2) the initial intruder
facts as well as the protocol and intruder rules, except
for the $\xor$-rule, which is assumed implicitly, (3) a
statement which defines a secrecy or authentication goal.
Moreover, options that are handed over to ProVerif may be
added.

Our implementation then first checks whether the given Horn
theory, say $T$, (part (2) of the input) is $\xor$-linear.
If it is not, an error message is returned. If it is, a set
$\C$ is computed such that the Horn theory is
$\C$-dominated. Recall that such a set always exists if the
Horn theory is $\xor$-linear. It is important to keep $\C$
as small as possible, in order for the reduction to be more
efficient. Once $\C$ is computed, the reduction
function as described in Section~\ref{sec:reduction} is
applied to $T$, i.e., $T^+$ is computed. Now, $T^+$
together with the rest of the original input is passed on
to ProVerif. This tool then does the rest of the work,
i.e., it checks the goals for $T^+$. This is possible
since, due the reduction, the XOR operator in $T^+$ can now
be considered to be an operator without any algebraic
properties.

Our implementation does not follow the construction of the
reduction function described in Section~\ref{sec:reduction}
precisely, in order to produce an output that is optimized
for ProVerif (but still equivalent): a) While terms of the
form $c\xor t$, with $c\in \CC$, $t\notin \CC$ are 
represented by $\tt xor(c,t)$, terms $a\xor b\in \CCnorm$
are represented by $\tt xx(a,b)$. This representation
prevents some unnecessary unifications between terms.
However, it is easy to see that with this representation,
the proofs of soundness and completeness of our reduction
still go through. The basic reason is that terms in
$\CCnorm$ can be seen as constants. b) For the Horn clauses
in Figure~\ref{fig:Tplus},
\eqref{rule:const}--\eqref{rule:gen}, we do not produce
copies for every choice of $c,c'\in \CCnorm$. Instead, we
use a more compact representation by introducing auxiliary
predicate symbols. For example, the family of Horn clauses
in \eqref{rule:variant} is represented as follows:
$\tt{xtab}(x,y,z), I(y), I(\tt{xor}(x,t)) \rightarrow
I(\tt{xor}(z,t))$, where the facts
$\tt{xtab}(c,c',\rep{c\xor c'})$ for every $c,c'\in
\CCnorm$ are added to the Horn theory given to ProVerif.

\subsection{Experiments}\label{sec:experiments}

\begin{figure}[!t]
    \centering
    \begin{tabular}{lcrr}
        \sf\footnotesize protocol & \sf\footnotesize correct & \sf\footnotesize reduction time & \sf\footnotesize ProVerif time \\
        \hline\\[-1.2ex]
        \sf NSL$_{\xor}$        & no  & 0.02s & 0.006s \\
        \sf NSL$_{\xor}$-fix    & yes & 0.04s & 0.09s \\
        \sf NSL$_{\xor}$-auth-A & no  & 0.03s & 0.16s \\
        \sf NSL$_{\xor}$-auth-A-fix & yes  & 0.03s & 0.02s \\
        \sf NSL$_{\xor}$-auth-B & yes  & 0.04s & 0.5s \\
        \sf SK3     & yes & 0.05s & 0.3s \\
        \sf RA      & no  & 0.05s & 0.17s \\
        \sf RA-fix  & yes & 0.05s & 0.27s \\
        \sf CCA-0         & no  & 0.15s & 109s \\
        \sf CCA-1A        & yes & 0.06s & 0.7s \\
        \sf CCA-1B        & yes & 0.07s & 1.3s \\
        \sf CCA-2B        & yes & 0.14s & 7.1s \\
        \sf CCA-2C        & yes & 0.15s & 58.0s \\
        \sf CCA-2E        & yes & 0.07s & 1.42s 
    \end{tabular}
    \caption{Experimental Results. \label{fig:exp}} 
\end{figure}

We applied our method to a set of ($\xor$-linear)
protocols. The results, obtained by running our
implementation on a 2,4 Ghz Intel CoreTM 2 Duo E6700
processor with 2GB RAM, are depicted in
Figure~\ref{fig:exp}, where we list both the time of the
reduction and the time ProVerif needed for the analysis of
the output of the reduction. We note that except for
certain versions of the CCA protocol, the other protocols
listed in Figure~\ref{fig:exp} are out of the scope of the
implementation in \cite{CortierKeighrenSteel-TACAS-2007},
the only other implementation that we know of for
cryptographic protocol analysis w.r.t.~an unbounded number
of sessions that takes XOR into account. As mentioned in
the introduction, the method in
\cite{CortierKeighrenSteel-TACAS-2007} is especially
tailored to the CCA protocol. It can only deal with
symmetric encryption and the XOR operator, but, for
example, cannot deal with protocols that use public-key
encryption or pairing. Let us discuss the protocols and
settings that we analyzed in more detail.

By \mbox{NSL$_{\xor}$} we denote our running example. Since
there is an attack on this protocol, we also propose a fix
\mbox{\sf NSL$_{\xor}$-fix} in which the message
$\penc{\an{M, N\xor B}}{\pub{sk_A}}$ is replaced by
$\penc{\an{M, h(\an{N,M})\xor B}}{\pub{sk_A}}$ for a hash
function $h(\cdot)$. We analyze both authentication and
secrecy properties for these ($\xor$-linear) protocols.

The ($\xor$-linear) protocol {\sf SK3}
\cite{ShoupRubin-EUROCRYPT-1996} is a key distribution
protocol for smart cards, which uses the XOR operator. {\sf
  RA} denotes an ($\xor$-linear) group protocol for key
distribution \cite{BullOtway-TR-DRA-1997}. Since there is a
known attack on this protocol, we proposed a fix: a message
$k_{A,B} \xor h(\an{\mathsf{key}(A),N})$ sent by the key
distribution server to $A$ is replaced by $k_{A,B}\xor
h(\an{\mathsf{key}(A),\an{N,B}})$.

\begin{figure*}[!t]
         \begin{align}
            I(x),\ I(\e{k}{\km\xor\data})  \rra  
                I(\e{x}{k}) \tag{{\it Encipher}} \label{cca:enc}\\[1ex]
            I(\e x k),\ I(\e k{\km\xor\data} )
                \rra I(x)  \tag{{\it Decipher}} \label{cca:dec} \\[1ex]
            I(\e{k}{\km\xor\typ}),\ I(\typ),\ I(\e{\kek}{\km\xor\exp})
                \rra I(\e{k}{\kek\xor\typ})  \tag{{\it KeyExport}} \label{cca:exp} \\[1ex]
            I(\e{k}{\kek \xor \typ}),\ I(\typ),\ I(\e{\kek}{\km\xor\imp})
                \rra I(\e{k}{\km\xor\typ})  \tag{{\it KeyImport}} \label{cca:imp} \\[1ex]
            I(k1), I(\typ) \rra I(\e{k1}{\km\xor\kp\xor\typ})
                \tag{{\it KeyPartImp-First}} \label{cca:first} \\[1ex]
            I(k2), I(\e{x}{\km\xor\kp\xor\typ}), I(\typ)
                \rra I(\e{x\xor k2}{\km\xor\kp\xor\typ})
                \tag{{\it KeyPartImp-Middle}} \label{cca:middle} \\[1ex]
            I(k3), I(\e{y}{\km\xor\kp\xor\typ}), I(\typ)
                \rra I(\e{y\xor k3}{\km\xor\typ})
                \tag{{\it KeyPartImp-Last}} \label{cca:last} \\[1ex]
            I(\e{k}{\kek_1\xor\typ}),\ I(\typ),\ I(\e{kek_1}{\km\xor\imp}),\
                I(\e{kek_2}{\km\xor\exp})  \rra  I(\e{k}{kek_2\xor\typ})
                \tag{{\it KeyTranslate}} \label{cca:transl}
         \end{align}
        \caption{CCA API \label{fig:cca}, where $\km$ denotes a constant
          (the key master stored in the cryptographic coprocessor),
          $\typ$ is a constant that ranges over the constants in
          $\{\data,\imp,\exp,\pin\}$, 
          and all other symbols
          ($x$, $y$, $k$, ...) are variables.} 
\end{figure*}


{\sf CCA} stands for Common Cryptographic Architecture
(CCA) API \cite{IBM-CCA} as implemented on the hardware
security module IBM 4758 (an IBM cryptographic
coprocessor). The CCA API is used in ATMs and mainframe
computers of many banks to carry out PIN verification
requests. It accepts a set of commands, which can be seen
as receive-send-actions, and hence, as cryptographic
protocols. The only key stored in the security module is
the master key $\km$. All other keys are kept outside of
the module in the form $\e{k}{\km\xor \typ}$, where
$\typ\in \{\data,\imp,\exp,\pin\}$ denotes the type of the
key, where each type is some fixed constant.  The commands
of the CCA API include the following: Commands for
encrypting/decrypting data using data keys. Commands to
export/import a key to/from another security module. This
is done by encrypting/decrypting the key by a
key-encryption-key. 

In Figure~\ref{fig:cca}, we model the most important
commands of the CCA API (see also
\cite{CortierKeighrenSteel-TACAS-2007}) in terms of Horn
clauses.  \eqref{cca:enc} and \eqref{cca:dec} are used to
encrypt/decrypt data by data keys.  \eqref{cca:exp} is used
to export a key to another security module by encrypting it
under a key-encryption-key, with \eqref{cca:imp} being the
corresponding import command. The problem is to make the
same key-encryption-key available in different security
modules. This is done by a secret sharing scheme using the
commands \eqref{cca:first}--\eqref{cca:last}, where $\kp$
is a type (a constant) which stands for ``key part'',
$\kek$ is obtained as $k1\xor k2 \xor k3$, and each $ki$,
$i\in \{1,2,3,\}$, is supposed to be known by only one
individual.  \eqref{cca:transl} is used to encrypt a key
under a different key-encryption-key.


We note that some of the Horn clauses in Figure~\ref{fig:cca}, namely
\eqref{cca:middle} and \eqref{cca:last}, are not linear.
Fortunately, one can apply a standard unfolding technique for Horn
clauses together with straightforward simplifications to obtain an
\emph{equivalent} Horn theory with only $\xor$-linear rules.

There are several known attacks on the CCA API, which
concern the key-part-import process. One attack is by Bond
\cite{Bond-CHES-2001}. As a result of this attack the intruder is able
to obtain PINs for each account number by performing
data encryption on the security module. A stronger attack
was found by IBM and is presented in \cite{Clulow-Msc}
where the intruder can obtain a PIN derivation key, and
hence, can obtain PINs even without interacting with the
security module. However, the IBM attack depends on key
conjuring \cite{CortierKeighrenSteel-TACAS-2007}, and
hence, is harder to carry out. Using our implementation
(together with ProVerif) and the configuration denoted by
{\sf CCA-0} in Figure~\ref{fig:exp}, we found a new attack
which achieves the same as the IBM attack, but is more
efficient as it does not depend on key conjuring. Our attack is
presented at the end of this section.

In response to the attacks reported in \cite{Bond-CHES-2001}, IBM
proposed two recommendations.

\medskip

\noindent\emph{Recommendation 1.}  As mentioned, the
attacks exploit problems in the key-part-import process. To
prevent these problems, one IBM recommendation is to
replace this part by a public-key setting. However, as
shown in \cite{CortierKeighrenSteel-TACAS-2007}, further
access control mechanisms are needed, which essentially
restrict the kind of commands certain roles may perform.
Two cases, which correspond to two different roles, are
considered, and are denoted {\sf CCA-1A} and {\sf CCA-1B}
in Figure~\ref{fig:exp}. We note that the Horn theories
that correspond to these cases are $\xor$-linear, and
hence, our tool can be applied directly, no changes are
necessary; not even the transformations mentioned above.
Since public-key encryption (and pairing) cannot be
directly handled by the tool presented by Cortier et
al.~\cite{CortierKeighrenSteel-TACAS-2007}, Cortier et
al.~had to modify the protocol in an ad hoc way, which is
not guaranteed to yield an equivalent protocol. This is
also why the runtimes of the tools cannot be compared
directly.

\medskip

\noindent\emph{Recommendation 2.} 
Here additional access control mechanisms are assumed which
ensure that no single role is able to mount an attack.  We
analyzed exactly the same subsets of commands as the ones
in \cite{CortierKeighrenSteel-TACAS-2007}. These cases are
denoted {\sf CCA-2B}, {\sf -2C}, and {\sf -2E} in
Figure~\ref{fig:exp}, following the notation in
\cite{CortierKeighrenSteel-TACAS-2007}. The runtimes
obtained in \cite{CortierKeighrenSteel-TACAS-2007} are
comparable to ours: 333s for {\sf CCA-2B}, 58s for {\sf
  -2C}, and 0.03s for {\sf -2E}.

\subsubsection*{Our Attack.}\label{sect:attack}

As we noted before, our tool found an attack
which---according to our knowledge---has not been
discovered before.  This attack uses the same assumptions
as Bond's attack in terms of the role played by the
intruder and his knowledge. As in the IBM attack, we use
the fact that $0$ is the default value for $\data$.
  
Our attack does not use key conjuring, and hence, is easier
to carry out than the IBM attack. As a result of the
attack, the intruder obtains a pin derivation key in clear
(like in the IBM attack).

\medskip In the attack we assume that a new
key-encryption-key $\kek$ needs to be imported, using the
three-part key import commands
\eqref{cca:first}--\eqref{cca:last}, which means that $\kek
= k1\xor k2\xor k3$, where $k1$, $k2$, $k3$ are the shares
known by three different individuals.
  
The key $\kek$ is then used to import a new pin-derivation
key $\pdk$ to the security module, in the form
\begin{equation}\label{pdk}
       \e{\pdk}{\kek\xor\pin}.
\end{equation}
We assume that this message can be seen by the attacker and
that the attacker is the third participant
of the process of importing $\kek$. In particular, the
attacker can perform \eqref{cca:last}, knows the value $k3$, and obtains the
message
\begin{equation}\label{kp2}
        \e{k1\xor k2}{\km\xor\kp\xor\imp}.
\end{equation}

Now we describe the steps of the attack.  After the
intruder receives \eqref{kp2}, he uses \eqref{cca:last}
with $k3\xor\pin$ instead of $k3$. In this way he obtains
\begin{equation}\tag{A1}\label{at1}
        \e{\kek \xor \pin}{\km\xor\imp}
\end{equation}
He uses the same command again, this time with
$k3\xor\pin\xor\exp$, obtaining:
\begin{equation}\tag{A2}\label{at2}
        \e{\kek \xor \pin \xor \exp}{\km\xor\imp}
\end{equation}
Next, when $\pdk$ is imported, the intruder uses \eqref{cca:imp}
twice: The first time with input \eqref{at1}, \eqref{pdk}, and $\typ =
\data = 0$, resulting in the message
\begin{equation}\tag{A3}\label{at3}
        \e{\pdk}{\km\xor\data}.
\end{equation}
The second time with input \eqref{at2}, \eqref{pdk},
and $\typ =
\exp$, resulting in the message
\begin{equation}\tag{A4}\label{at4}
        \e{\pdk}{\km\xor\exp}.
\end{equation}
Now, using \eqref{cca:exp} with input \eqref{at3},
\eqref{at4}, and $\typ = \data = 0$, the attacker obtains
\begin{equation}\tag{A5}\label{at5}
        \e{\pdk}{\pdk\xor\data} = \e\pdk\pdk.
\end{equation}
Finally, using \eqref{cca:dec} with input \eqref{at5} and
\eqref{at3}, the attacker obtains the clear value of
$\pdk$, which can be used to obtain the PIN for any account
number: Given an account number, the corresponding PIN is
derived by encrypting the account number under $\pdk$.


    \appendix
    \section{Proofs for Section \ref{sec:dominated}}

In what follows we will use the following notation: $t\topAC t'$ if
$t$ and $t'$ are coincide up to transformation modulo AC, with
standard terms kept unchanged. For example, $(a\xor \an{a\xor
b,b})\xor b\topAC (a\xor b) \xor \an{a\xor b,b} \not\!\!\topAC (a\xor
b ) \xor \an{b\xor a,b}$.

    \subsection*{Proof of Lemma \ref{lem:presceq}.}
      Assume that $r'$ is a complete bad subterm of
      $r\theta$.  We proceed by structural induction on $r$
      and consider the following cases:
        \begin{itemize}
        \item $r = x$ is a variable: Because $\theta$ is
          $\xor$-reduced, so is $\theta(x)$. So, since $r'$
          is a subterm of $\theta(x)$ and $\theta(x)\sim
          t$, Lemma~\ref{lem:xorreducedbadterm} implies
          that there exists a complete bad subterm $t'$ of
          $t$ with $t'\sim r'$.

        \item $r = f(r_1,\dots,r_n)$, for $f \neq \xor$: In
          this case, $t$ is of the form $f(t_1,\dots,t_n)$
          with $t_i\sim r_i\theta$.  Since $r\theta$ is not
          bad, $r'$ is a subterm of $r_i\theta$ for some
          $i\in\{1,\dots,n\}$. By the induction hypothesis,
          there exists a complete bad subterm $t'$ of $t_i$
          (and thus, of $t$) with $t'\sim r'$.

        \item $r=c$, for $c\in\CC$: We have that
          $r\theta=r$. Since $r$ is $\C$-dominated it
          follows that $c$ does not contain complete bad
          subterms. Hence, nothing is to show.

        \item $r\topAC c\xor r''$
          with $c\in\CC$ and $r''\notin \CC$ standard, but
          not a variable: The case that $r'=r\theta$ cannot
          occur since this term is not a bad term.  Since
          $r$ is $\C$-dominated, $c$ does not contain a
          complete bad subterm. Hence, $r'$ cannot be a
          subterm of $c\theta=c$.  So $r'$ is a subterm of
          $r''\theta$.

            Let $s\sim
            r''\theta$, for some $\xor$-reduced term $s\in
            \CC$. So, we have that $t \sim c\xor s$.  Since
            $r''$, as a proper subterm of $r$, is
            $\C$-dominated, from the fact that $r'$ is a
            complete bad subterm of $r''\theta$ it follows
            by the induction hypothesis that there exists a
            complete bad subterm $t'$ of $s$ with $r'\sim
            t'$. Now, since $c$ is $\C$-dominated (because
            by assumption $r$ is), and hence, $c$ does not
            contain complete bad subterms, it follows that
            $t'$ occurs as a subterm in $t$.


          \item $r\topAC c\xor x$,
            for $c\in\CC$ and a variable $x$: Assume that
            $\theta(x) \sim c'\xor t_1\xor\cdots\xor t_n$
            with $n\ge 0$, $c'\in \CC$, and pairwise
            $\xor$-distinct standard terms
            $t_1,\dots,t_n\notin\Csim$. First assume that
            $r' = r\theta$, which implies that $n>1$.  Then
            we can set $t'=t$ since $t'=t\sim r\theta=r'$.
            Otherwise, since $r$ is $\C$-dominated, it
            follows that $c$ does not contain a
            complete bad subterm. Hence, $r'$ is a complete
            bad subterm of $c'$ or there exists $i$ such
            that $r'$ is a complete bad subterm of $t_i$.
            In any case, this term, let us call it $t''$,
            does not coincide with any standard term $c_i$
            with $c=c_1\xor \ldots \xor c_k$ because these
            terms do not contain complete bad subterms.
            Hence, $t''$ is equivalent to some term $t'$ in
            $t$.  Thus, there exists a complete bad subterm
            $t'$ of $t$ with $r'\sim t'$.
            \qed
        \end{itemize}

\subsection*{Proof of Lemma \ref{lem:badtermsubstitution}.}
  We proceed by structural induction on $s$:
        \begin{itemize}
        \item \emph{$s=x$ is a variable}: We can set $t' = t$.
        \item \emph{$s$ is standard}: Then $s\neq t$, and
          thus, for one of the direct subterms $s'$ of $s$,
          $s'\theta$ has to contain $t$ as a complete
          subterm. By the induction hypothesis, there
          exists a variable $x\in\var(s')\subseteq \var(s)$
          such that $\theta(x)$ contains a complete bad
          subterm $t'$ with $t'\ceq t$.
        \item \emph{$s\in \CC$}: This case is not possible,
          since $s=s\theta$ is $\C$-dominated, and hence,
          cannot contain a complete bad subterm.
        \item \emph{$s\topAC c\xor s'$, where
            $c\in\CC$ and $s'\notin \CC$ is standard, but not a
            variable}: Then, $t\not=s\theta$ since
          $s\theta$ is not a bad term. Moreover, $c$ is
          $C$-dominated (since it belongs to $s$), and
          hence, cannot have $t$ as a subterm. Hence, $t$
          must be a subterm of $s'\theta$ and we can use
          the induction hypothesis.
        \item
            \emph{$s \topAC c\xor x$, for $c\in\CC$ and a
            variable $x$}: If $t\sim (c\xor x)\theta$, we can
          choose $t'=\theta(x)$, since $t'\ceq
          t$. Otherwise, since $c$ is $\CC$-dominated, and
          hence, does not contain complete bad subterms, it
          follows that $t$ is a subterm of
          $\theta(x)$. Hence, we can choose $t'=t$. 
          \qed
        \end{itemize}

\section{Proofs for Section \ref{sec:reduction}}

\subsection*{Proof of Lemma \ref{lem:sigmacomputable}.}

We start with showing that matching of $\C$-dominated terms modulo XOR
yields a uniquely determined matcher modulo XOR, if any, and this
matcher can be computed in polynomial time.

\medskip\noindent\emph{Claim 1.}
  Let $s$ be a $\C$-dominated term and $t$ be a ground term. Then, the
  matcher of $s$ against $t$ is uniquely determined modulo XOR, i.e.,
  if $s\theta\sim t$ and $s\theta'\sim t$ for substitutions $\theta$
  and $\theta'$, then $\theta(x)\sim \theta'(x)$ for every $x\in
  \var(s)$. Moreover, the matcher of $s$ against $t$ can be
  computed in polynomial time in the size of $s$ and $t$.


\begin{proof}
    We show how to compute the unique (modulo XOR) matcher of
    $s$ against $t$. The computed matcher will be in normal form.
    First, for substitutions $\sigma_1$ and $\sigma_2$ we define
    $\sigma_1\sqcup\sigma_2$ as $\sigma_1\cup\sigma_2$ if for each
    $x\in \dom(\sigma_1)\cap\dom(\sigma_2)$ we have that
    $\sigma_1(x)=\sigma_2(x)$. Otherwise, $\sigma_1\sqcup\sigma_2$ is
    undefined.  
    
    We obtain the matcher $\sigma$ of $s$ against $t$ recursively as follows.
    We can assume that both $s$ and $t$ are in normal form (one can
    transform a term $t$ into its normal form $\rep t$ in polynomial
    time)\footnote{So far, we defined $\rep\cdot$ only for
    $\C$-dominated terms. Now, we need to extend the definition of 
    $\rep\cdot$ to work for all terms. Such a extension is
    straightforward. So we skip it.}.
    We consider the following cases:

    \begin{nenum}
    \item
        $s=x$ is a variable: Then $\sigma = \{t/x\}$.
    \item
        $s$ is a ground term:
        Then $\sigma = \emptyset$ if $s=t$.  Otherwise, the
        matcher does not exist.

    \item
        $s = c \xor s'$, for ground $c$ and nonground, standard $s'$:
        Then $\sigma$ is the matcher of $s'$ against the term 
        $\rep{c\xor t}$.

    \item
        $s=f(s_1,\dots,s_n)$, for $f\neq\xor$, non ground: 
        
        If $t=f(t_1,\dots,t_n)$, we
        take $\sigma = \sigma_1 \sqcup \cdots \sqcup \sigma_n$, where
        $\sigma_i$, for $\rang i1n$, is the matcher of $s_i$ against
        $t_i$. Otherwise, i.e.\ if such a $\sigma$ does not exist, the
        matcher does not exist.  
    \end{nenum}

    It is easy to show that this algorithm computes a
    matcher of $s$ against $t$, if it exists, and moreover, that this
    matcher is unique.
\end{proof}

Now, we are ready to prove Lemma \ref{lem:sigmacomputable}: The domain
of every substitution in $\FSub(t)$ is polynomial, since it is a
subset of $\var(t)$. Hence, it suffices to show that for every
variable in the domain there are only exponentially many possible
values and these values can be computed effectively. This is clear
for the case (i) and (ii) in Definition~\ref{def:F}, as $\CCnorm$ is
bounded exponentially (in the size of $\C$). 

As for case (iii), let $s, x$ and $\theta$ be given as in this case.
Note that $s$ is $\C$-dominated.  Hence, $\theta$ is the unique
matcher of $s$ against some $c\in\CCnorm$.  Because $\theta$ can be
computed from $s$ and $c$ in polynomial time and, moreover, both $s$
and $c$ range over exponentially bounded sets (in fact, $\frag(t)$ is
polynomial and $\CCnorm$ is exponential), the claim of the lemma
follows.

\subsection*{Proof of Lemma \ref{lem:propertySigmat}.}
  Let $t$ and $\theta$ be given as in the lemma. By
  construction, it is easy to see that
  $\sigma=\sigma(t,\theta)\in \FSub(t)$. It is also easy to
  see that there exists $\theta'$ such that
  $\theta=\sigma\theta'$ and the domain of $\theta'$ is the
  set of all variables that occur in some $\sigma(x)$ for
  $x\in \dom(x)$. Note that $\theta'$ is uniquely
  determined.  Let $t'$ be a subterm of $t$. We need to
  show that $\rep{t'\theta}=\rep{t'\sigma}\theta'$.  We
  proceed by structural induction on $t'$.

  First, suppose that $t'\in\var(t)$: Let $x=t'$. We distinguish the
  following cases: 
    \begin{aenum}
    \item If $\sigma(x)$ was defined according to
      Definition~\ref{def:sigma}, (a), then
      $\sigma(x)=\theta(x)$. It follows that
      $\rep{x\theta}=\rep{x\sigma}\theta'$. 
    \item Otherwise, if $\sigma(x)$ was defined according to
      Definition~\ref{def:sigma}, (b), then $x\in\frag(t)$,
      $\theta(x) = c \xor s'$, for $c\in\CCnorm$ and some
      normalized standard term $s'$ not in $\CC$, and
      $\sigma(x) = c\xor x$. It follows that $\theta'(x)=s'$
      and $\rep{x\sigma}\theta' = \rep{c\xor x}\theta' = (c\xor
      x)\theta' = c\xor s' = \rep{c\xor s'} = \rep{x\theta}$.
    \item Otherwise, if $\sigma(x)$ was defined according to
      Definition~\ref{def:sigma}, (c), then
      $\sigma(x)=x$ and $\theta'(x)=\theta(x)$. Since $\theta(x)$ is normalized, it
      follows that $\rep{x\theta}=\rep{x\sigma}\theta'$. 
    \end{aenum}

 Second, suppose that $t'= f(t_1,\dots,t_n)$,
    for $f\neq\xor$: By the induction hypothesis, it
    follows that $\rep{t'\theta}=
    f(\rep{t_1\theta},\ldots,\rep{t_n
      \theta})=f(\rep{t_1\sigma}\theta',\ldots,\rep{t_n\sigma}\theta')=\rep{t'\sigma}\theta'$. 
 
 If we suppose that $t'\sim c$, for $c\in \CCnorm$, then it immediately
 follows that $\rep{t'\theta}=\rep{t'\sigma}\theta'$. 

 Now, suppose that $t'\sim c\xor x$, for $c\in \CCnorm$: We distinguish the following cases: 
\begin{aenum}
\item If $\sigma(x)$ was defined according to
  Definition~\ref{def:sigma}, (a), then
  $\sigma(x)=\theta(x)$. It follows that
  $\rep{t'\theta}=\rep{t'\sigma}\theta'$. 
\item Otherwise, if $\sigma(x)$ was defined according to
  Definition~\ref{def:sigma}, (b), then $x\in\frag(t)$,
  $\theta(x) = c' \xor s'$, for $c'\in\CCnorm$ and some
  normalized standard term $s'$ not in $\CC$, and
  $\sigma(x) = c'\xor x$. It follows that $\theta'(x)=s'$
  and $\rep{t'\sigma}\theta' = \rep{c\xor c'\xor x}\theta'
  = \rep{c\xor c'}\xor x\theta' = \rep{c\xor c'}\xor s' =
  \rep{c\xor c'\xor s'} = \rep{t'\theta}$.
\item Otherwise, if $\sigma(x)$ was defined according to
  Definition~\ref{def:sigma}, (c), then $\sigma(x)=x$
  and $\theta'(x)=\theta(x)$. Since $x\in\frag(t)$ and
  items (a) and (b) of Definition~\ref{def:sigma} do not hold,
  $\theta'(x)$ is a normalized standard term not in $\CCnorm$. It
  follows that $\rep{t'\theta}=\rep{c\xor \theta(x)}=c\xor
  \theta(x)=\rep{t'\sigma}\theta'$.
\end{aenum}

Finally, suppose that
$t'\sim c\xor s$, for $c\in \CCnorm$ and a $\C$-dominated,
standard subterm $s$ of $t'$ with $s\notin\CC$ and
$s\notin\var(t)$: We distinguish the
following cases: 
\begin{aenum}
\item If $s\theta\in \CC$, then $\sigma(x)$, for $x\in
  \var(s)$, was defined
  according to Definition~\ref{def:sigma}, (a) since $s\in \frag(t)$.
  Hence, $\sigma(x)=\theta(x)$ for all $x\in \var(s)$, and thus
  $s\sigma$ is ground and $s\sigma=s\theta$.
  It follows that
  $\rep{t'\theta}=\rep{c\xor s\theta}=\rep{x\xor
    s\sigma}=\rep{x\xor s\sigma}\theta'=\rep{t'\sigma}\theta'$.
\item Otherwise, if $s\theta\notin \CC$, by the induction hypothesis
    it follows that $\rep{s\theta}=\rep{s\sigma}\theta'$. We have also
    that $s\sigma$ is not in $\CC$ (otherwise, $s\theta$ would be also
    in $\CC$).  Moreover, since $s\theta\notin \CC$, we obtain that
    $\rep{t'\theta}=c\xor \rep{s\theta}=c\xor
    \rep{s\sigma}\theta'=\rep{(c\xor
    s)\sigma}\theta'=\rep{t'\sigma}\theta'$. 
    \qed
\end{aenum}


    {\small 
    \bibliography{literature} }

\end{document}